\def\Pr{{\mathbb P}}
\def\eqdef{\triangleq}
\def\nat{{\mathbb  N}}
\def\eqd{\stackrel{d}{=}}
\def\ind{{\bf 1}}
\def\leqd{\stackrel{d}{\le}}
\def\geqd{\stackrel{d}{\ge}}
\def\12{\frac{1}{2}}
\def\expect{{\mathbb  E}}
\def\Pr{{\mathbb P}}
\def\nat{{\mathbb  N}}
\def\eqdef{\triangleq}
\def\eqd{\stackrel{d}{=}}
\def\leqd{\stackrel{d}{\le}}
\def\ind{{\bf 1}}
\def\bar{\overline}
\newtheorem{theorem}{Theorem}[section]
\newtheorem{proposition}{Proposition}[section]
\newtheorem{lemma}{Lemma}[section]
\newtheorem{corollary}{Corollary}[section]
\newtheorem{example}{Example}
\newtheorem{remark}{Remark}
\begin{document}
%
\title{Stability of Finite Population ALOHA with Variable Packets}
%
%
%

\author{Predrag R. Jelenkovi\'c and Jian Tan
\thanks{Predrag R. Jelenkovi\'c and Jian Tan are with the Department
of Electrical Engineering, Columbia University, New York, 10027 USA,
 e-mail: \{predrag,jiantan\}@ee.columbia.edu.}
 \thanks{The preliminary version of
this paper has appeared in \cite{Jelen06ALOHA}.} }

\maketitle

\begin{abstract}
ALOHA is one of the most basic Medium Access Control (MAC) protocols
and represents a foundation for other more sophisticated distributed
and asynchronous MAC protocols, e.g., CSMA.   In this paper, unlike
in the traditional work that focused on mean value analysis, we
study the distributional properties of packet transmission delays
over an ALOHA channel. We discover a new phenomenon showing that a
basic finite population ALOHA model with variable size (exponential)
packets is characterized by power law transmission delays, possibly
even resulting in zero throughput. These results are in contrast to
the classical work that shows exponential delays and positive
throughput for finite population ALOHA with fixed packets.
Furthermore, we characterize a new stability condition that is
entirely derived from the tail behavior of the packet and backoff
distributions that may not be determined by mean values.  The power
law effects and the possible instability might be diminished, or
perhaps eliminated, by reducing the variability of packets. However,
we show that even a slotted (synchronized) ALOHA with packets of
constant size can exhibit power law delays when the number of active
users is random. From an engineering perspective, our results imply
that the variability of packet sizes and number of active users need
to be taken into consideration when designing robust MAC protocols,
especially for ad-hoc/sensor networks where other factors, such as
link failures and mobility,  might further compound the problem.
\end{abstract}

\begin{IEEEkeywords}
ALOHA, medium access control, power laws, heavy-tailed
distributions, light-tailed distributions, ad-hoc/sensor networks.
\end{IEEEkeywords}

%
\IEEEpeerreviewmaketitle

\section{Introduction}
 ALOHA
represents one of the first and most basic distributed Medium Access
Control (MAC) protocols \cite{Abr70}. It is easy to implement since
it does not require any user coordination or complicated controls
and, thus, represents a basis for many modern MAC protocols, e.g.,
Carrier Sense Multiple Access (CSMA). Basically, ALOHA  enables
multiple users to share a common communication medium (channel) in a
completely uncoordinated manner. Namely, a user attempts to send a
packet over the common channel and, if there are no other user
(packet) transmissions during the same time, the packet is
considered successfully transmitted. Otherwise, if the transmissions
of more than one packet (user) overlap, we say that there is a
collision and the colliding packets need to be retransmitted. Each
user retransmits a packet after waiting for an independent (usually
exponential/geometric) period of time, making ALOHA entirely
decentralized and asynchronous. The desirable properties of ALOHA,
including its low complexity and distributed/asynchronous nature,
make it especially beneficial for wireless sensor networks with
limited resources as well as for wireless ad hoc networks that have
difficulty in carrier sensing due to hidden terminal problems and
mobility. Furthermore, because of these properties ALOHA represents
a basis for many more sophisticated MAC protocols, e.g., CSMA.

Traditionally, the performance evaluation of ALOHA has focused on
mean value (throughput) analysis, the examples of which can be found
in every standard textbook on networking, e.g., see
\cite{BG92,Sch86,RS90}; for more recent references see \cite{NMT03}
and the references therein (due to space limitations, we do not
provide comprehensive literature review on ALOHA in this paper).
However, it appears that there are no explicit and general studies
(more than two users) of the distributional properties of ALOHA,
e.g., delay distributions. In this regard, in
Subsection~\ref{sec:description}, we consider a standard finite
population ALOHA model with variable length packets
\cite{Fer77,BB80} that have an asymptotically exponential tail.
Surprisingly, we discover a new phenomenon that the distribution of
the number of retransmissions (collisions) and time between two
successful transmissions follow power law distributions, as stated
in Proposition~\ref{theorem:preliminary} of Subsection \ref{ss:pre},
Theorem \ref{theorem:transient} of Subsection~\ref{ss:starting} on
starting behavior as well as Theorem~\ref{theorem:steadyLower} of
Subsection~\ref{ss:steadyState} on steady state behavior.  Based on
this observation, we derive new stability conditions for finite
population ALOHA with variable packets
in Theorem~\ref{theorem:refinedStability} of
Section~\ref{section:stability}.
Informally, our theorem shows that when the exponential decay rate
of the packet distribution is smaller than the parameter of the
exponential backoff distribution and the arrival rate, even the
finite population ALOHA may have zero throughput. This is contrary
to the common belief that the finite population ALOHA system always
has a positive, albeit possibly small, throughput. Furthermore, even
when the long term throughput is positive, the high variability of
power laws (infinite variance when the power law exponent is less
than $2$) may cause unstable buffer content (queue sizes), implying
 periods of very high congestion, long delays, and low throughput.
It also may appear counterintuitive that the system is characterized
by power laws even though the distributions of all the variables
(arrivals, backoffs and packets) of the system are of exponential
type. However, this is in line with the results in
\cite{FS05,SL06,PJ06RETRANS}, which show that job completion times
in systems with failures where jobs restart from the very beginning
exhibit similar power law behavior. Our study in \cite{PJ06RETRANS}
was done in the communication context where job completion times are
represented by document/packet transmission delays, e.g., ARQ
protocol. It may also be worth noting that \cite{PJ06RETRANS}
reveals the existence of power law delays regardless of how light or
heavy the packet/document and link failure distributions may be
(e.g., Gaussian), as long as they have proportional hazard
functions. Furthermore, from a mathematical perspective,
Proposition~\ref{theorem:steadyUpper},
Theorems~\ref{theorem:transient}
 and \ref{theorem:steadyLower} analyze a
more complex setting than the one in  \cite{PJ06RETRANS,SL06} and,
thus, require a novel proof. Hence, when compared with
\cite{PJ06RETRANS,SL06}, this paper both discovers a new related
phenomenon in a communication MAC layer application area and
provides a novel analysis of it.

As already stated in the abstract, the preceding power law
phenomenon is a result of combined effects of packet variability and
collisions. Hence, one can see easily that the power law delays can
be eliminated by reducing the variability of packets. Indeed, for
slotted ALOHA with constant size packets the delays are
geometrically distributed. However, we show in
Section~\ref{s:slotted} that, when the number of users sharing the
channel is geometrically distributed, the slotted ALOHA exhibits
power law delays as well.

In Section~\ref{s:NSE}, we illustrate our results with simulation
experiments, which show that the asymptotic power law regime is
valid even for relatively small delays and reasonably large
probability values. Furthermore, the distribution of packets/number
of users in practice might have a bounded support. To this end, we
show by a simulation experiment that this situation results  in
distributions that have power law main body with an exponentiated
(stretched) support in relation to the support of the packet
size/number of active users. Hence, although exponentially bounded,
the delays may be prohibitively long.

In practical applications, we may have combined effects of both
variable packets and a random number of users, implying that the
delay and congestion is likely to be even worse than predicted by
our results. Thus, from an engineering perspective, one has to pay
special attention to the packet variability and the number of users
when designing robust MAC protocols, especially for ad-hoc/sensor
networks where link failures \cite{PJ06RETRANS}, mobility and many
other factors might further worsen the performance.

In summary, the rest of the paper is organized as follows.  In
Section~\ref{s:finitePop}, we provide the description and the
preliminary power law bounds. Then, we present our new stability
conditions that are based on packet distribution decay rates in
Section~\ref{section:stability}. Further distributional properties
for the number of retransmissions and delays are investigated in
Section~\ref{section:distributional}. Section~\ref{s:slotted}
contains the results on power laws in slotted ALOHA with random
number of users. Experimental validation of our results can be found
in Section~\ref{s:NSE}. The paper is concluded in
Section~\ref{s:alohaconcluding}. Finally, some of the more technical
proofs are postponed to Section~\ref{sc:ALOHAproof}.

\section{Power Laws in the Finite Population ALOHA with Variable Size Packets}\label{s:finitePop}
In this section we show that the variability of packet sizes,  when
coupled with the contention nature of ALOHA, is a cause of power law
delays. This study is motivated by the well-known fact that packets
in today's Internet have variable sizes. To further emphasize that
packet variability is a sole cause of power laws, we assume a finite
population ALOHA model where each user can hold (queue) up to one
packet at a time since the increased queueing only further
exacerbates the problem. In addition, in Section~\ref{s:slotted} we
show that the user variability in an infinite population model may
be a cause of power law delays as well. In the remainder of this
section, we describe the model and introduce the necessary notation
in Subsection~\ref{sec:description} and  present the preliminary
results in Subsection~\ref{ss:pre}.

\subsection{Model Description}\label{sec:description}
Consider $M\ge 2$ users sharing a common communication link
(channel) of unit capacity. Each user can hold at most one packet in
its queue and, when the queue is empty, a new packet is generated
after an independent (from all other variables) exponential time
with mean $1/\lambda$. Each packet has an independent length that is
equal in distribution to a generic random variable $L$. A user with
a newly generated packet attempts its transmission immediately and,
if there are no other users transmitting during the same time, the
packet is considered successfully transmitted. Otherwise, if the
transmissions of more than one packet overlap, we say that there is
a collision and the colliding packets need to be retransmitted; for
a visual representation of the system see Figure~\ref{fig:aloha}.
After a collision, each participating user waits (backoffs) for an
independent exponential period of time with mean $1/\nu$ and then
attempts to retransmit its packet.  Each such user continues this
procedure until its packet is successfully transmitted and then it
generates a new packet after an independent exponential time of mean
$1/\lambda$.  Let $\{U(t)\}_{t\ge 0}$ denote the number of users
that are in backoff state at time $t$ and $\left\{L_i^{(t)}
\right\}_{1\le i\le U(t)}$ denote the packet sizes of all the $U(t)$
number of active users at time $t$.

 \begin{figure}[h]
\centering
\includegraphics[width=8.8cm]{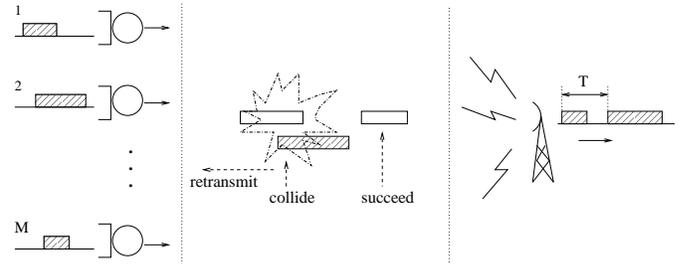}
  \caption{Finite population ALOHA model with variable packet sizes.}
\label{fig:aloha}
\end{figure}

From the perspective of the receiver, let $\{C_i\}_{i\ge 1}$ be an
increasing sequence of positive time points when either a collision
or successful transmission occurs with $C_0=0$. Let $\{D_m\}_{m\ge
1}$ be the sequence of time points when the receiver successfully
receives the $m$th packet and define $T_m=D_{m}-D_{m-1}$ to be the
transmission time for the $m$th successfully received packet with a
convention $D_0 = 0$. Correspondingly, we can define $N_m$ to be the
number of (re)transmissions in the interval $(D_{m-1}, D_{m}]$ for
the $m$th successful transmission.

Now, from the perspective of user $i, , 1\le i \le M$, define
$\left\{D_m^{(i)}\right\}_{m\ge 1}$ to be the sequence of time
points when user $i$ successfully sends the $m$th packet and define
$T_m^{(i)}=D_{m}^{(i)}-D_{m-1}^{(i)}$ to be the transmission time
for the $m$th successfully transmitted packet with a convention
$D_0^{(i)} = 0$. By the same fashion, we can define $N_m^{(i)}$ to
be the number of (re)transmissions in the interval $(D_{m-1}^{(i)},
D_{m}^{(i)}]$ for the $m$th successful transmission.

 We will study the stability of this model as well as the asymptotic properties of
the distributions of $N_m, T_m$ and $N_m^{(i)}, T_m^{(i)}$.

\subsection{Power Law Bounds}\label{ss:pre}
In the rest of this subsection,  we present preliminary results for
the finite ALOHA with variable packets,  described in the preceding
subsection.  Let $x \wedge y=\min(x, y)$, $x\vee y = \max(x, y)$,
and $\leqd $, $\geqd$, $\eqd$ denote inequalities and equality in
distribution, respectively.

Basically, ALOHA model can be viewed as a state dependent channel
with failures where the failure rate depends on the number of
backoffed users and the sizes of the packets present in the system.
Hence, this model can be viewed as a generalization of the problem
stated in \cite{PJ06RETRANS,Jelen07}.  The following proposition
shows that the distributions of the number of retransmissions and
the delays in our ALOHA model are always sandwiched between two
power laws, which is obtained by uniformly bounding the variable
collision (failure) rates independently of the state of the channel.
\begin{proposition}\label{theorem:preliminary}
Assume that , for $\mu>0$,
\begin{equation}\label{eq:condition}
   \lim_{x\rightarrow \infty}\frac{\log \Pr[L>x]}{x}=-\mu,
   \end{equation}
   and let $\underline{N}$ and $\overline{N}$ be two random
   variables with distributions
   \begin{equation}
  \Pr\left[\underline{N}>n\right] = \expect \left[\left( 1- e^{-L (M-1)(\lambda \wedge \nu)}
 \right)^n \right]
   \end{equation}
   and
   \begin{equation}
  \Pr\left[\overline{N}>n\right] = \expect \left[\left( 1- \frac{\lambda \wedge \nu}{ M
(\lambda \vee \nu)} e^{-L (M-1)(\lambda \vee \nu)}
 \right)^n \right]. \nonumber
   \end{equation}
 Then, uniformly for all $m$ and $i$,
   \begin{equation} \label{eq:presanwich}
\underline{N} \leqd N_m^{(i)} \leqd \overline{N}
    \end{equation}
and
\begin{equation}\label{eq:lowboundNm}
     \lim_{n\rightarrow \infty}\frac{\log \Pr\left[\underline{N}>n\right]}{\log
    n} = -\frac{\mu}{(M-1)(\lambda \wedge \nu )},
   \end{equation}
   \begin{equation}\label{eq:upboundNm}
\lim_{n\rightarrow \infty}\frac{\log
\Pr\left[\overline{N}>n\right]}{\log
    n}=  -\frac{\mu}{(M-1)(\lambda \vee \nu )}.
   \end{equation}
   Similarly, there exist $\underline{T}$ and $\overline{T}$ such
   that (\ref{eq:presanwich}), (\ref{eq:lowboundNm}) and
   (\ref{eq:upboundNm}) are satisfied for the corresponding
   expressions for $T_m^{(i)}$, $\underline{T}$ and $\overline{T}$
   (replacing $N$ by $T$).
\end{proposition}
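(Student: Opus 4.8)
The plan is to condition throughout on the length $\ell$ of the $m$th packet of user $i$ (which by assumption is independent of everything else and distributed as $L$) and to show that, conditionally, $N_m^{(i)}$ is sandwiched between the two geometric laws defining $\underline N$ and $\overline N$; averaging over $L$ then gives (\ref{eq:presanwich}). Write $p(\ell)=e^{-\ell(M-1)(\lambda\wedge\nu)}$ and $q(\ell)=\frac{\lambda\wedge\nu}{M(\lambda\vee\nu)}e^{-\ell(M-1)(\lambda\vee\nu)}$. Since $\{N_m^{(i)}>n\}$ is exactly the event that the first $n$ transmission attempts of this packet all fail, successive conditioning along the attempts shows it suffices to bound the conditional probability that a \emph{generic} attempt succeeds, given the whole history up to its start and given $L=\ell$: if this probability is always $\le p(\ell)$ then $\Pr[N_m^{(i)}>n\mid L=\ell]\ge(1-p(\ell))^n$, hence $\underline N\leqd N_m^{(i)}$; if it is always $\ge q(\ell)$ then $\Pr[N_m^{(i)}>n\mid L=\ell]\le(1-q(\ell))^n$, hence $N_m^{(i)}\leqd\overline N$. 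Uniformity in $m$ and $i$ is automatic, nothing in the argument referring to which packet or user is tagged.

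\emph{Upper bound $p(\ell)$.} Condition on the history up to the start time $\tau$ of the attempt. If some other user is transmitting at $\tau$, the attempt surely collides and the success probability is $0\le p(\ell)$. Otherwise each of the $M-1$ other users is idle or in backoff, so by memorylessness of the exponential timers the times until these users next begin transmitting are conditionally independent exponentials with rates $r_j\in\{\lambda,\nu\}$, and these rates do not change before one of the timers fires; thus the attempt can succeed only if none fires during $(\tau,\tau+\ell)$, which has probability $e^{-\ell\sum_j r_j}\le e^{-\ell(M-1)(\lambda\wedge\nu)}=p(\ell)$ because each $r_j\ge\lambda\wedge\nu$.

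\emph{Lower bound $q(\ell)$ — the crux.} This cannot be obtained at the attempt time itself, where the conditional success probability is genuinely $0$ in every state in which some other user is transmitting. The device is to anchor at an instant $s$ at which the channel is empty and all timers are fresh: for a retransmission, $s$ is the instant the preceding collision occurs, at which point every transmitting user (in particular the tagged one) aborts and enters backoff, leaving the channel empty; for the first transmission of the $m$th packet, $s=D_{m-1}^{(i)}$, the end of user $i$'s previous successful transmission, after which likewise no other user is transmitting. At $s^{+}$ all $M$ users are idle or in backoff with fresh independent exponential timers of rates $r_1,\dots,r_M\in\{\lambda,\nu\}$, the tagged user's rate $r_i$ being $\nu$ (resp.\ $\lambda$). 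On the event that the tagged user's timer fires first — of probability $r_i/\sum_k r_k\ge(\lambda\wedge\nu)/(M(\lambda\vee\nu))$ — the tagged user begins its next transmission while the channel is still empty, and by memorylessness the $M-1$ other residual timers are again fresh and independent, so the window is collision-free with conditional probability at least $e^{-\ell\sum_{j\ne i}r_j}\ge e^{-\ell(M-1)(\lambda\vee\nu)}$; multiplying gives success probability at least $q(\ell)$. This anchoring — the only delicate point of the proof, and what sidesteps the non-memorylessness of residual transmission times — is exactly what makes the constant $(\lambda\wedge\nu)/(M(\lambda\vee\nu))$ appear in $\overline N$.

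\emph{Power laws and the delay version.} Put $a=(M-1)(\lambda\wedge\nu)$, so $\Pr[\underline N>n]=\expect[(1-e^{-aL})^n]$. From $e^{-nx}\ge(1-x)^n\ge e^{-nx/(1-x)}$, splitting the expectation at $L=\frac{1\pm\varepsilon}{a}\log n$ shows $(1-e^{-aL})^n$ is super-polynomially small on $\{L\le\frac{1-\varepsilon}{a}\log n\}$ and at least a positive constant on $\{L\ge\frac{1+\varepsilon}{a}\log n\}$, so $\Pr[\underline N>n]$ equals $\Pr[L>\tfrac1a\log n]$ up to factors $n^{\pm O(\varepsilon)}$, which by (\ref{eq:condition}) is $n^{-\mu/a+o(1)}$; letting $\varepsilon\downarrow0$ gives (\ref{eq:lowboundNm}), and the identical computation with $a=(M-1)(\lambda\vee\nu)$ gives (\ref{eq:upboundNm}) (the factor $(\lambda\wedge\nu)/(M(\lambda\vee\nu))$ merely shifts the transition point by $O(1)$). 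For the delays, each attempt occupies at least a backoff period and at most a backoff period plus a transmission time $\le\ell$, so conditionally on $L=\ell$ the delay $T_m^{(i)}$ is sandwiched between geometric sums (with the law of $\underline N$, resp.\ $\overline N$, as the number of summands) of i.i.d.\ such terms; running the same split produces $\underline T,\overline T$ that satisfy the $T$-analogues of (\ref{eq:presanwich})--(\ref{eq:upboundNm}).
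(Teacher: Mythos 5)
Your argument follows the paper's proof in all essentials: the same per-attempt collision bounds $1-e^{-\ell(M-1)(\lambda\wedge\nu)}$ and $1-\frac{\lambda\wedge\nu}{M(\lambda\vee\nu)}e^{-\ell(M-1)(\lambda\vee\nu)}$, obtained by the same memorylessness and channel-grabbing argument anchored at collision epochs, chained by successive conditioning, and the same random-sum treatment of $T_m^{(i)}$ via a union-bound/Chernoff split. The only difference is cosmetic: you extract the $n^{-\mu/a}$ asymptotics of $\expect\left[(1-ce^{-aL})^n\right]$ by truncating $L$ at $\frac{1\pm\varepsilon}{a}\log n$, whereas the paper substitutes $e^{-(\mu\pm\epsilon)L}\eqd U$ and invokes $\expect[e^{-\theta U^{1/\alpha}}]\sim\Gamma(\alpha+1)\theta^{-\alpha}$; both are standard and yield the same logarithmic asymptotics.
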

\begin{proof}
We begin with studying $N_m^{(i)}$. First, we prove the \emph{lower
bound}. Note that a collision for user $i$ may occur for two
different reasons. Either, when user $i$ attempts to access the
channel it collides with the already existing transmission, or after
user $i$ successfully starts its transmission it is interrupted
later by some other user that tries to access the channel. Now, if
$\underline{N}^{(i)}_m$ only counts the collisions due to the second
reason, then clearly $N_m^{(i)} \geq \underline{N}^{(i)}_m$.
Similarly, if $\underline{T}^{(i)}_m$ is the total time that only
measures the delay caused by the collisions of the second type, then
$T_m^{(i)} \geq \underline{T}^{(i)}_m$.

Now, consider the system at the moment when user $i$ has
successfully initiated its transmission. At that moment a number of
users ($\le M-1$) can be in the backoffed state (exponential with
rate $\nu$ for each user) and the remaining ones are waiting for the
new packets to arrive (exponential with rate $\lambda$ for each
user). Hence, the time until another user attempts to access the
channel is upper bounded by an exponential time of rate
$(M-1)(\lambda \wedge \nu)$. Therefore, given $L$, the probability
that there is a collision of the second type is lower bounded by
$1-e^{-L(M-1)(\lambda \wedge \nu)}$, implying that
\begin{align}\label{eq:templateLowbound5}
 \Pr\left[N_m^{(i)} > n\right] &\geq \Pr[\underline{N}^{(i)}_m>n
 ]\nonumber\\
 &
 \geq \expect \left[\left( 1- e^{-L (M-1)(\lambda \wedge \nu)}
 \right)^n \right]\nonumber\\
 &=  \Pr[\underline{N}>n ],
\end{align}
since the repetitions of exponential times of rate $(M-1)(\lambda
\wedge \nu)$ are independent due to the memoryless property.

 Condition
(\ref{eq:condition}) implies that, for any $\epsilon>0$, there
exists $x_{\epsilon}$ such that $\Pr[L>x] \geq e^{-(\mu+\epsilon)x}$
for
 all $x\geq  x_{\epsilon} $.
Then, if we define random variable $L_{\epsilon}$
 with $\Pr[L_{\epsilon}>x]=e^{-(\mu+\epsilon)x}, x\geq 0$,
we obtain
$$
    L \geqd L_{\epsilon} \ind(L_{\epsilon}>x_{\epsilon}),
$$
resulting in
\begin{align}\label{eq:templateLowbound}
    \Pr\left[\underline{N} > n\right] & \geq \expect \left[\left(1-e^{-L_{\epsilon}
    \ind(L_{\epsilon}>x_{\epsilon})(M-1)(\lambda \wedge \nu)} \right)^n
  \right]\nonumber \\
& \geq \expect \left[\left(1-e^{-L_{\epsilon}
 (M-1)(\lambda \wedge \nu)} \right)^n \ind(L_{\epsilon}>x_{\epsilon})
  \right].
  \end{align}
Noticing that for any $\delta>0$, there exists $0<x_{\delta}<1$ such
that $1-x\geq e^{-(1+\delta)x}$ for all $0\le x \le x_{\delta}$, we
can choose $x_{\epsilon}$ large enough, such that
  \begin{align}\label{eq:preLow}
 \Pr\left[\underline{N} > n\right] & \geq \expect \left[ e^{-(1+\epsilon)n e^{-L_{\epsilon}
 (M-1)(\lambda \wedge \nu)} } \ind(L_{\epsilon}>x_{\epsilon})
  \right]\nonumber \\
  & = \expect \left[ e^{-(1+\epsilon)n e^{-L_{\epsilon}
 (M-1)(\lambda \wedge \nu)} } \right] \nonumber\\
 &\quad \quad -\expect \left[ e^{-(1+\epsilon)n e^{-L_{\epsilon}
 (M-1)(\lambda \wedge \nu)} } \ind(L_{\epsilon}\leq x_{\epsilon})
  \right]\nonumber\\
  &\geq \expect \left[ e^{-(1+\epsilon)n e^{-L_{\epsilon}
 (M-1)(\lambda \wedge \nu)} } \right]- \zeta^n,
  \end{align}
  where $\zeta = e^{-(1+\epsilon)e^{-x_{\epsilon}
 (M-1)(\lambda \wedge \nu}) }<1$.

Now, for any $0<x<1$,
\[
\Pr\left[e^{-(\mu+\epsilon)L_{\epsilon}}<x\right]=\Pr
\left[L_{\epsilon}>- \frac{\log x }{ \mu+\epsilon} \right]=x,
\]
implying that $e^{-(\mu+\epsilon)L_{\epsilon}} \eqd U$, where  $U$
is a uniform random variable between $0$ and $1$.
  Thus, we can derive from (\ref{eq:preLow})
    \begin{align*}
  \Pr[\underline{N} > n] &\geq
  \expect \left[e^{-n (1+\epsilon) U^{(M-1)(\lambda \wedge \nu)/(\mu+\epsilon)}}
   \right] - \zeta^n.
\end{align*}
Now, since
\begin{equation}\label{eq:unif2power}
\expect[e^{-\theta U^{1/\alpha}}]\sim
\Gamma(\alpha+1)/\theta^{\alpha}
\end{equation}
 as $\theta \to \infty$,  one can
easily obtain
\begin{equation}\label{eq:upperN1}
\varliminf_{n\rightarrow \infty}\frac{\log \Pr\left[\underline{N}
> n\right]}{\log n} \geq -\frac{\mu+\epsilon}{(M-1)(\lambda \wedge
\nu)},
\end{equation}
which,  by passing $\epsilon \to 0$,  proves the lower bound.

Next, we prove the \emph{upper bound}. We observe that a successful
transmission has two steps. First, the user has to initiate the
transmission successfully (grab the channel). Second, after
accessing the channel it has to complete the transmission without
interruptions from other users.  We will  bound these events by
independent ones as described below.

After a successful transmission or a collision, user $i$ will
attempt to access the channel after an exponential time of rate no
smaller than $\lambda \wedge \nu$;  each other user will compete to
access the channel after an exponential time of rate no larger than
$\lambda \vee \nu$. Once user $i$ grabs the channel, its
transmission will be successful if the first channel access time of
all the other users is larger than $L$. Note that the first access
time of the other users is exponential with rate upper bounded by
$(M-1)(\lambda \vee \nu)$. Therefore,  given $L$, the probability of
a collision (failure) is upper bounded by
$$
1-  \frac{\lambda \wedge \nu}{ M (\lambda \vee \nu)}
e^{-L(M-1)(\lambda \vee
  \nu)}.
$$
Furthermore, due to the memoryless property of exponential
distribution the probability of $n$ successive collisions, given
$L$, can be upper bounded by independent events with probabilities
given by the preceding expression. Therefore, after unconditioning,
we obtain
\begin{align}\label{eq:preliminaryUp}
 \Pr\left[N_m^{(i)} >
n\right] &\leq  \expect \left[\left( 1- \frac{\lambda \wedge \nu}{ M
(\lambda \vee \nu)} e^{-L (M-1)(\lambda \vee \nu)}
 \right)^n \right] \nonumber\\
 &= \Pr\left[\bar{N}>n\right].
\end{align}

Next, using $1-x\leq e^{-x}$ and defining $\zeta \eqdef \lambda
\wedge \nu / (M (\lambda \vee \nu))$, we derive, for
$x_{\epsilon}>0$,
\begin{align}\label{eq:hi1a}
 \Pr\left[\bar{N}>n\right] &\leq  \expect \left[ e^{-n\zeta e^{-L(M-1)(\lambda \vee
\nu) }}
 \ind\left(L> x_{\epsilon}\right) \right] \nonumber\\
 &\quad \quad +
  \left(1-\zeta e^{-x_{\epsilon}(M-1)(\lambda\vee \nu)} \right)^{n} \nonumber\\
  &\leq  \expect \left[ e^{-n\zeta e^{-L\ind\left(L> x_{\epsilon}\right) (M-1)(\lambda \vee \nu) }}
 \right] + \eta^n ,
\end{align}
where $\eta\eqdef 1-\zeta e^{-x_{\epsilon}(M-1)(\lambda\vee
\nu)}<1$. Now, condition (\ref{eq:condition}) implies that,   for
any $0<\epsilon<\mu$, we can choose $x_{\epsilon}$ such that
$\Pr[L>x] \leq e^{-(\mu-\epsilon)x}$ for all $x\geq x_{\epsilon}$.
Thus,  by defining an exponential random variable $L^{\epsilon}$
with $\Pr[L^{\epsilon}>x]=e^{-(\mu-\epsilon)x}, x\geq 0$, we obtain
$
   L \ind(L>x_{\epsilon}) \leqd L^{\epsilon}
$. Therefore, (\ref{eq:hi1a}) implies
\begin{align}\label{eq:hi2a}
  \Pr\left[\bar{N}>n\right]
 &\leq  \expect \left[ e^{-n \zeta e^{- L^{\epsilon}(M-1)(\lambda \vee \nu)}}
 \right] + \eta^n.
\end{align}
Similarly as in the proof of the lower bound, we know
$e^{-(\mu-\epsilon)L^{\epsilon}} \eqd U$ is a uniform random
variable between $0$ and $1$.
  Thus, (\ref{eq:hi2a}) implies
  \begin{align} 
 \Pr\left[\bar{N}>n\right] &\leq
  \expect \left[e^{-n \zeta U^{(M-1)(\lambda \vee \nu)/(\mu-\epsilon)}}
   \right] +
  \eta^{n}. \nonumber
\end{align}
By (\ref{eq:unif2power}), we obtain
\begin{equation}
\varlimsup_{n\rightarrow \infty}\frac{\log
\Pr\left[\bar{N}>n\right]}{\log n} \leq
-\frac{\mu-\epsilon}{(M-1)(\lambda \vee \nu)}, \nonumber
\end{equation}
which, by passing $\epsilon \to 0$,  finishes the proof of the upper
bound.

Now, we prove the result for $T_m^{(i)}$.  Observe that each attempt
for user $i$ to transmit the $m$th packet consists of two steps.
First, user $i$ initiates an attempt to grab the channel; for the
$j$th attempt, denote by $\{X_j\}_{j\ge 1}$ the idle period where
user $i$ either is waiting for a new packet to arrive or is in its
backoff state for the $m$th packet. Hence, $X_1$ is exponential with
rate $\lambda$ and $X_j, j>1$ are exponential with rate $\nu$.
Second, after user $i$ makes an attempt to access the channel, it
either collides with other users that are transmitting packets or
starts transmitting its own packet; for the $j$th attempt, denote by
$\{Y_j\}_{j\ge 1}$ the period during which there are no
transmissions from other users after user $i$ starts sending the
$m$th packet. Note that if user $i$ fails to grab the channel for
the $j$th attempt, then $Y_j=0$; if user $i$ successfully grabs the
channel for this attempt, then it spends time $Y_j$ transmitting the
$m$th packet without interference from other users. Thus, we have
\begin{equation}\label{eq:preTmRep}
     T_m^{(i)} = \sum_{j=1}^{N_m^{(i)}}X_j+\sum_{j=1}^{N_m^{(i)}-1}Y_j
     +L.
\end{equation}
Since $\{X_j\}$ is a sequence of exponential random variables with
rate equal to either  $\lambda$ or $\nu$,  we can always find two
i.i.d. exponential sequences, $\{\underline{X},
\underline{X}_j\}_{j\ge 1}$ and  $\{\bar{X}, \bar{X}_j\}_{j\ge 1}$,
such that
\begin{equation}\label{eq:preTmRep1}
\underline{X}_j \leq X_j \leq \bar{X}_j.
\end{equation}
 Additionally, observe
that when user $i$ successfully grabs the channel, $Y_j$ is
stochastically upper bounded by an exponential random variable with
rate $(M-1)(\lambda \vee \nu)$, and thus, we can construct a
sequence of i.i.d. exponential random variables $\{\bar{Y},
\bar{Y}_j\}$ such that
\begin{equation}\label{eq:preTmRep2}
Y_j \leq \bar{Y}_j,
\end{equation}
where $\{\bar{Y}_j\}$ is independent of $\{\bar{X}_j\}$.

First, we prove the upper bound.  Using the union bound,
\begin{align}
  \Pr\left[T_m^{(i)}>2t\right] &\leq \Pr\left[\sum_{j=1}^{N_m^{(i)}}({X}_j+{Y}_j)+L>2t\right] \nonumber\\
  &\leq \Pr\left[\sum_{j=1}^{N_m^{(i)}}({X}_j+{Y}_j)>t, N_m^{(i)}\leq
  \frac{t}{2\expect[\bar{X}+\bar{Y}]}\right]\nonumber\\
  &\quad    + \Pr\left[N_m^{(i)}>
  \frac{t}{2\expect[\bar{X}+\bar{Y}]}\right]+\Pr\left[L >t\right] \nonumber\\
  &\leq
  \Pr\left[\sum_{j=1}^{t/\left(2\expect[\bar{X}+\bar{Y}]\right)}({X}_j+{Y}_j)>t\right]\nonumber\\
  &\quad
   + \Pr\left[N_m^{(i)}>
  \frac{t}{2\expect[\bar{X}+\bar{Y}]}\right]+\Pr\left[L
  >t\right]. \nonumber
  \end{align}
By  (\ref{eq:presanwich}), (\ref{eq:preTmRep1}) and
(\ref{eq:preTmRep2}), we obtain
  \begin{align}\label{eq:upperT3}
   \Pr\left[T_m^{(i)}>2t\right]  &\leq
   \Pr\left[\sum_{j=1}^{t/\left(2\expect[\bar{X}+\bar{Y}]\right)}(\bar{X}_j+\bar{Y}_j)>t\right]
   \nonumber\\
   &
   + \Pr\left[\bar{N}>
  \frac{t}{2\expect[\bar{X}+\bar{Y}]}\right]+\Pr\left[L >t\right]
  \nonumber\\
  &\eqdef I_1+I_2+I_3,
  \end{align}
which,  by defining random variable $\bar{T}$
  with the following distribution
  \begin{align}
  \Pr\left[\bar{T}>2t\right]
&\eqdef \min\{I_1+I_2+I_3, 1\}, \nonumber
\end{align}
implies $\Pr\left[T_m^{(i)}>2t\right] \leq
   \Pr\left[\bar{T}>2t\right]$, i.e.,
 \begin{align} \label{eq:upperT4}
T_m^{(i)} \leqd \bar{T}.
\end{align}

For (\ref{eq:upperT3}), applying Chernoff bound, we derive
$I_1=O(e^{-\eta n})$ for some
 $\eta>0$.  Condition (\ref{eq:condition}) implies $I_3=O(e^{-\eta n})$ for
 some other
 $\eta>0$.   To compute $I_2$,  using (\ref{eq:upboundNm}), we
 obtain,
$$
\lim_{t\rightarrow \infty}\frac{\log \Pr\left[\bar{N}>
  \frac{t}{2\expect[\bar{X}+\bar{Y}]}\right]}{\log t} = -\frac{\mu}{(M-1)(\lambda \vee
  \nu)},
$$
 which,
 combined with the estimates for $I_1$ and $I_3$, implies that
\begin{equation}\label{eq:preTmUp2}
\lim_{t\rightarrow \infty}\frac{\log \Pr\left[\bar{T} >t
\right]}{\log t} = -\frac{\mu}{(M-1)(\lambda \vee \nu)}.
\end{equation}

Next, we prove the lower bound. It is easy to obtain
\begin{align}
  \Pr\left[T_m^{(i)} > t \right] &\geq
  \Pr\left[\sum_{j=1}^{N_m^{(i)}-1}(X_j+Y_j)+L>t\right]\nonumber\\
  &
  \geq \Pr\left[\sum_{j=1}^{N_m^{(i)}-1}X_j>t\right] \nonumber\\
  &\geq \Pr\left[\sum_{j=1}^{N_m^{(i)}-1}X_j>t,
     N_m^{(i)} > \frac{2t}{\expect[\underline{X}]}+1\right]\nonumber\\
       &\geq \Pr\left[N_m^{(i)} >
       \frac{2t}{\expect[\underline{X}]}+1\right] \nonumber\\
       &\quad\quad
    -\Pr\left[\sum_{j=1}^{N_m^{(i)}-1}X_j\leq t,
     N_m^{(i)} \geq \frac{2t}{\expect[\underline{X}]}+1\right], \nonumber
\end{align}
which, by recalling $X_j \geq \underline{X}_j$ and using
(\ref{eq:presanwich}), yields
\begin{align} 
   \Pr\left[T_m^{(i)} > t\right]&\geq \Pr\left[N_m^{(i)} > \frac{2t}{\expect[\underline{X}]}+1\right]
    -\Pr\left[\sum_{j=1}^{2t/\expect[\underline{X}]}X_j\leq t
    \right] \nonumber\\
   &\geq \Pr\left[\underline{N} > \frac{2t}{\expect[\underline{X}]}+1\right]
    -\Pr\left[\sum_{j=1}^{2t/\expect[\underline{X}]}\underline{X}_j\leq t
    \right] \nonumber\\
   &\eqdef I_1(t)-I_2(t). \nonumber
\end{align}

Now, define a random variable $\underline{T}$ with
\begin{align}
   \Pr\left[\underline{T}> t \right]&\eqdef \max\{I_1(t)-I_2(t), 0
   \},
   \nonumber
\end{align}
implying
\begin{equation}
T_m^{(i)} \geqd \underline{T}. \nonumber
\end{equation}

Next, by Churnoff bound, we obtain $I_2(t)\leq O\left( e^{-\eta
t}\right)$ for some $\eta>0$. Using (\ref{eq:upboundNm}), we
 derive
$$
\lim_{t\rightarrow \infty}\frac{\log \Pr\left[\underline{N}>
  \frac{2t}{\expect[\underline{X}]}+1\right]}{\log t} = -\frac{\mu}{(M-1)(\lambda \wedge
  \nu)},
$$
 which,
 combined with the estimates for $I_2(t)$, implies that
\begin{equation}\label{eq:preTmLow4}
\lim_{t\rightarrow \infty}\frac{\log \Pr\left[\underline{T}
>t \right]}{\log t} = -\frac{\mu}{(M-1)(\lambda \wedge \nu)}.
\end{equation}

Combining (\ref{eq:preTmUp2}) and (\ref{eq:preTmLow4}) completes the
proof.
\end{proof}

The following lemma studies the distribution of the number of
retransmissions that occur from a point when there is a departure
until the system becomes full.
For the two sequences $\{C_i\}$ and $\{D_m\}$ defined in
Subsection~\ref{sec:description},  noting that $\{D_m\}$ is a
subsequence of $\{C_i\}$,  we can define the position of $D_m$ in
$\{C_i\}$ by $h_m\eqdef \min\{i\ge 0: C_i = D_{m}\}$.
 Let $N^f_m$, $m\ge 0$ be the total number of
both collisions and departures until the system becomes full and all
the users are backlogged (a collision occurs) for the first time
after $D_{m}$, i.e., $N^f_m\eqdef \min\{l-h_m: U(C_l+)=M, l \ge
h_m\}$, where $U(C_l+)$ represents the right hand limit of $U(t)$ at
time $C_l$. Recall that $\left\{L_i^{(t)} \right\}_{1\le i\le U(t)}$
represents the packet sizes of all the $U(t)$ number of active users
at time $t$.
\begin{lemma}\label{lemma:Nfm}
For any finite values  $\{L_i^{(D_{m})}\}_{1\le i\le U(D_{m})}$ at
time $D_{m}$, uniformly for all $m> 0$, we have
\begin{equation}\label{eq:NsBound}
\Pr\left[N^f_m>n\right]=O\left(e^{-\eta \sqrt{n}}\right)
\end{equation}
where the constant $\eta>0$ does not depend on
$\{L_i^{(D_m)}\}_{1\le i\le U(D_m)}$.
\end{lemma}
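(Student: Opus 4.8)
\noindent\emph{Proposed proof.}
If $U(D_m+)=M$ there is nothing to prove, so assume $U(D_m+)<M$ and bound, uniformly over the initial configuration, the number of channel events needed for $\{U(t)\}$ to reach the full state $U=M$. The engine of the proof is a \emph{one-shot filling estimate}: fix any busy period (a maximal interval of channel occupancy) and condition on the whole past up to its start; by the memoryless property the residual exponential clocks of the $M-1$ users not initiating it are independent with rates in $\{\lambda,\nu\}$, so if the initiating transmission has length $\ell$ then each of these clocks rings inside the busy period with probability at least $1-e^{-(\lambda\wedge\nu)\ell}$, and on the event that \emph{all} of them do so, each of those users collides and the system fills during that busy period; hence the conditional filling probability is at least $\pi(\ell):=(1-e^{-(\lambda\wedge\nu)\ell})^{M-1}$. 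Moreover the channel is idle only when some user is awaiting a fresh packet (true whenever $U<M$) and the next busy period is initiated by the user whose clock rings first, so it is fresh-initiated with conditional probability at least $\rho_0:=\lambda/(\lambda+(M-1)\nu)>0$, and for such a busy period $\ell\eqd L$ is independent of the past, whence the conditional filling probability is at least the constant $p_1:=\expect[\pi(L)]>0$, depending only on $M,\lambda,\nu$ and the law of $L$ --- by (\ref{eq:condition}), $\Pr[L>0]=1$ --- and not on $m$ or on $\{L_i^{(D_m)}\}$. Combining this with the filtration estimate $\Pr[\text{not filled within }k\text{ busy periods}]\le(1-\rho_0p_1)^{k}$ shows that the number of busy periods after $D_m$ needed to reach $U=M$ is stochastically dominated by a geometric variable of parameter $\rho_0p_1$, uniformly in $m$ and in $\{L_i^{(D_m)}\}$.

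\noindent\emph{From busy periods to events.} It remains to translate this into a bound on the number of \emph{events}. Decompose the first $n$ channel events into busy periods; there are at most $n$ of them, say $J$, and fix a small $\varepsilon>0$. On $\{J\ge\varepsilon\sqrt n\}$ the geometric domination gives $\Pr[N^f_m>n,\,J\ge\varepsilon\sqrt n]\le(1-\rho_0p_1)^{\varepsilon\sqrt n-1}=O(e^{-\eta\sqrt n})$. On the complement the first $n$ events are packed into at most $\varepsilon\sqrt n$ busy periods, so one busy period carries more than $\sqrt n/\varepsilon$ events; since successive transmission attempts of any fixed user are separated by times stochastically dominating an $\mathrm{Exp}(\lambda\vee\nu)$ variable, the number of attempts (hence of events) in a busy period of length $B$ is stochastically dominated by $\mathrm{Poisson}(M(\lambda\vee\nu)B)$, so a Poisson tail bound forces that busy period to have length at least $c_5\sqrt n$ off an event of probability $O(ne^{-\eta\sqrt n})$. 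But by the one-shot estimate a busy period fails to fill the system only if one of its $M-1$ non-initiating residual clocks exceeds the busy period's own length, and for a length $\ge c_5\sqrt n$ this costs at most $(M-1)e^{-(\lambda\wedge\nu)c_5\sqrt n}$ per busy period; a union bound over the at most $n$ of them adds a further $O(ne^{-\eta\sqrt n})$. Collecting the terms yields $\Pr[N^f_m>n]=O(e^{-\eta\sqrt n})$ with $\eta$ depending only on $M,\lambda,\nu$ and the law of $L$; with a little more care (bounding the event count of a non-filling busy period by a Poisson mixed over an exponentially tailed length and summing a geometric number of such variables) one in fact obtains an exponential bound $O(e^{-\eta n})$.

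\noindent\emph{Main obstacle.} The real difficulty is the last step: one must fix a clean accounting of ``channel events'' versus ``transmission attempts'' versus ``busy periods'' --- a $k$-fold collision is a single $C_i$ but $k$ attempts, and a busy period can be prolonged both by chains of overlapping transmissions and by repeated backoff re-attempts of the same users --- and then carry out the Poisson/Chernoff estimates so that no constant depends on the (arbitrary, possibly enormous) backlogged packet sizes $\{L_i^{(D_m)}\}$. The conceptual fact that underlies this uniformity, and that I would isolate before any computation, is that a long backlogged packet is harmless: it only guarantees a long busy period, during which all fresh-packet users pile on, so the system fills early irrespective of how long that packet is.
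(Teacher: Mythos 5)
There is a genuine gap, and it sits exactly at the heart of your argument: the \emph{one-shot filling estimate} presupposes collision dynamics that are not those of this model. Here a collision is a single time point $C_i$ at which the colliding users abort and enter backoff; the channel is idle immediately after every $C_i$. (This is visible throughout the paper: in the decomposition $T_m^{(i)}=\sum_j X_j+\sum_{j=1}^{N_m^{(i)}-1}Y_j+L$ a failed attempt contributes only the interference-free prefix $Y_j<L$, not the full packet, and the paper's own proof of Lemma~\ref{lemma:Nfm} speaks of new arrivals colliding ``with each other'' in consecutive pairs.) Consequently a busy period initiated by a packet of length $\ell$ does \emph{not} last time $\ell$ once a second user rings: it ends at that first collision, after which the next user to ring grabs an idle channel and starts a fresh successful transmission rather than colliding. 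So the event ``all $M-1$ residual clocks ring within $[0,\ell]$'' does not imply the system fills, and $\pi(\ell)=(1-e^{-(\lambda\wedge\nu)\ell})^{M-1}$ is not a valid lower bound on the conditional filling probability of a busy period. Users become backlogged essentially in pairs, a user whose transmission completes \emph{departs} and must wait for a fresh arrival, and the parity of the number of non-backlogged users matters; none of this is captured by your estimate. Your closing intuition (``a long backlogged packet is harmless: it only guarantees a long busy period, during which all fresh-packet users pile on'') fails for the same reason.

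The paper's proof keeps your overall skeleton (a geometrically dominated number of ``rounds,'' each with a uniformly positive filling probability, plus a Poisson bound on the number of events per round, combined via a $\sqrt{n}$ split), but replaces busy periods by unit-length time intervals and replaces the one-shot estimate by an explicit forced trajectory: on each interval, condition all currently backlogged users to make \emph{no} retransmission for half a unit (this is where uniformity in the arbitrary $\{L_i^{(D_m)}\}$ comes from --- their packet lengths are simply never exercised), force every empty user to receive a packet of length $L>1$ within that half unit so that these arrivals collide pairwise, and then, if $M-U$ is odd so that one arrival is left transmitting alone, force one backlogged user to retransmit in the second half unit to create the final collision. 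Each ingredient has probability bounded below by a constant depending only on $M,\lambda,\nu$ and $\Pr[L>1]$, giving $\Pr[X\ge j+1]\le(1-p)^j$ for the index $X$ of the filling interval. If you want to salvage your route, you would need to replace $\pi(\ell)$ by a multi-collision construction of this kind; the busy-period bookkeeping you flag as the ``main obstacle'' is the lesser of the two problems.
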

\begin{remark}
 We believe
that it is possible to prove a tighter exponential bound
$\Pr\left[N^f_m>n\right]=O\left(e^{-\eta n}\right)$, but the
preceding Weibull bound suffices for our proofs.
\end{remark}

The \textbf{proof} of Lemma~\ref{lemma:Nfm} is presented in
Section~\ref{sc:ALOHAproof}.

\section{Stability}\label{section:stability}
In this Section, we derive the stability condition of finite
population ALOHA with variable packets. Corollaries~\ref{cor:pre1}
and \ref{cor:pre2} are based on
Proposition~\ref{theorem:preliminary};
Proposition~\ref{theorem:steadyUpper} studies the distributional
properties of the upper bound for the number of (re)transmissions
and transmission delay for each successfully received packet
observed at the receiver.  Using these results,  we derive the
stability condition in Theorem~\ref{theorem:refinedStability}.

We use $\underline{\varlimsup}$ to denote both $\varlimsup$ and
$\varliminf$, i.e., $\underline{\varlimsup}$ means that the
corresponding two statements with respect to $\varlimsup$ and
$\varliminf$ are true. From Proposition~\ref{theorem:preliminary},
we can easily obtain the following two corollaries. Note that in
Corollary~\ref{cor:pre1} we use $\underline{\varlimsup}$ with
respect to $m$ since the existence of the stationary region for
$N_m^{(i)}$ and $T_m^{(i)}$ is not established. At this point of our
analysis, we could not find an easy argument for resolving this,
maybe minor, technical issue.
\begin{corollary}\label{cor:pre1}
If $\lambda=\nu>0$, then, as $n\to \infty$,
 \begin{equation}
  \underset{m \rightarrow \infty}{\underline{\varlimsup}} \frac{\log \Pr\left[N_m^{(i)}>n\right]}{\log
  n}\to
  -\frac{\mu}{(M-1)\nu}. \nonumber
   \end{equation}
\end{corollary}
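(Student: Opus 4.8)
The plan is to read the statement off directly from the stochastic sandwich established in Proposition~\ref{theorem:preliminary}. The only structural observation needed is that when $\lambda=\nu$ one has $\lambda\wedge\nu=\lambda\vee\nu=\nu$, so the two envelope variables $\underline{N}$ and $\overline{N}$ share the \emph{same} logarithmic decay rate: both (\ref{eq:lowboundNm}) and (\ref{eq:upboundNm}) collapse to $-\mu/((M-1)\nu)$. Consequently the power-law exponent of $N_m^{(i)}$ is pinned down exactly, and no information about the $m$-dependence (or stationarity in $m$) is required.

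Concretely, I would first recall that (\ref{eq:presanwich}) holds uniformly in $m$ and $i$, i.e. $\Pr[\underline{N}>n]\le\Pr[N_m^{(i)}>n]\le\Pr[\overline{N}>n]$ for every $n$, and note that all three probabilities are strictly positive: condition (\ref{eq:condition}) forces $\Pr[L>x]>0$ for all $x\ge 0$, so $1-e^{-L(M-1)\nu}>0$ with positive probability and hence $\Pr[\underline{N}>n]>0$. Thus the logarithms are finite and nonpositive. Dividing the chain of inequalities by $\log n>0$ (for $n\ge 2$) preserves it, giving
\[
\frac{\log\Pr[\underline{N}>n]}{\log n}\ \le\ \frac{\log\Pr[N_m^{(i)}>n]}{\log n}\ \le\ \frac{\log\Pr[\overline{N}>n]}{\log n}.
\]
Since the outer terms do not depend on $m$, applying $\varlimsup_{m\to\infty}$ (respectively $\varliminf_{m\to\infty}$) to the middle term leaves it trapped between exactly the same two outer terms. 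Letting $n\to\infty$ and invoking (\ref{eq:lowboundNm})--(\ref{eq:upboundNm}) with $\lambda\wedge\nu=\lambda\vee\nu=\nu$, both outer terms converge to $-\mu/((M-1)\nu)$, and the squeeze theorem delivers the claim. (The identical argument, with $\underline{T}$, $\overline{T}$ in place of $\underline{N}$, $\overline{N}$, would cover $T_m^{(i)}$ as well, should one want it.)

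There is essentially no obstacle: the two points deserving a word of justification are that the envelopes in Proposition~\ref{theorem:preliminary} are uniform in $m$ (so that the $\varlimsup_m$/$\varliminf_m$ can be slipped inside the envelope without perturbing it) and that the tail probabilities are bounded away from $0$ for each fixed $n$ (so that dividing by $\log n$ is legitimate); both are immediate from the proposition's statement and (\ref{eq:condition}). The reason the corollary is phrased with $\underline{\varlimsup}_{m}$ rather than a genuine limit in $m$ is, as remarked in the text, the unresolved question of a stationary regime for $\{N_m^{(i)}\}_m$; the sandwich argument is insensitive to this precisely because it never needs convergence in $m$.
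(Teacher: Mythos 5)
Your proposal is correct and is exactly the argument the paper intends: the paper states that Corollary~\ref{cor:pre1} follows "easily" from Proposition~\ref{theorem:preliminary}, and the intended route is precisely the sandwich (\ref{eq:presanwich}) combined with the observation that (\ref{eq:lowboundNm}) and (\ref{eq:upboundNm}) coincide when $\lambda=\nu$. Your added remarks on uniformity in $m$ and on the strict positivity of the tail probabilities are the right technical points to flag, and the squeeze argument goes through as you describe.
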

\begin{corollary}\label{cor:pre2}
If $0<\lambda\leq \nu$ and $\mu >  (M-1) \nu$, then the system has a
positive throughput. If $\lambda\geq \nu>0$ and $\mu<(M-1) \nu$,
then the system has a zero throughput.
\end{corollary}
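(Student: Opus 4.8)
The plan is to derive both statements directly from Proposition~\ref{theorem:preliminary}, using the elementary fact that a tail exponent larger than $1$ forces the bounding variable to have finite mean, while an exponent smaller than $1$ forces infinite mean. Write $M^{(i)}(t)$ for the number of packets user $i$ has successfully transmitted in $[0,t]$, so that $M^{(i)}(t)\ge m$ iff $D_m^{(i)}\le t$, and let $S(t)=\sum_{i=1}^{M}M^{(i)}(t)$ be the total number of successful transmissions in $[0,t]$; here ``positive throughput'' will mean $\limsup_{t\to\infty}S(t)/t>0$ a.s.\ and ``zero throughput'' will mean $S(t)/t\to 0$ a.s.

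\emph{Positive throughput.} Assume $0<\lambda\le\nu$ (so $\lambda\vee\nu=\nu$) and $\mu>(M-1)\nu$. By the $T$-part of Proposition~\ref{theorem:preliminary}, $T_m^{(i)}\leqd\overline T$ uniformly in $m,i$, and by (\ref{eq:preTmUp2}) the exponent of $\overline T$ equals $-\mu/((M-1)\nu)<-1$; hence $\Pr[\overline T>t]\le t^{-\beta}$ for some $\beta>1$ and all large $t$, so $\expect[\overline T]<\infty$. Therefore $\expect[D_m^{(i)}]=\sum_{k=1}^{m}\expect[T_k^{(i)}]\le m\,\expect[\overline T]$, and Fatou's lemma gives $\expect[\liminf_m D_m^{(i)}/m]\le\expect[\overline T]<\infty$, so $\liminf_m D_m^{(i)}/m<\infty$ a.s. Evaluating $M^{(i)}$ at the times $D_{m_k}^{(i)}$ along a subsequence realizing this finite $\liminf$ yields $\limsup_t M^{(i)}(t)/t>0$ a.s., hence $\limsup_t S(t)/t>0$. (To upgrade this to the statement that $S(t)/t$ converges to a positive constant one would add the regenerative structure at the visits of the system to the empty state together with the renewal--reward theorem, which requires a bound on the expected cycle length; this is routine and not needed here.)

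\emph{Zero throughput.} Assume $\lambda\ge\nu>0$ (so $\lambda\wedge\nu=\nu$) and $\mu<(M-1)\nu$. By Proposition~\ref{theorem:preliminary}, $T_m^{(i)}\geqd\underline T$ uniformly in $m,i$, and by (\ref{eq:preTmLow4}) the exponent of $\underline T$ equals $-\mu/((M-1)\nu)>-1$; hence $\Pr[\underline T>t]\ge t^{-\alpha}$ for some $\alpha<1$ and all large $t$, so $\expect[\underline T]=\infty$. This is not yet enough, since $\expect[T_m^{(i)}]=\infty$ by itself does not force $D_m^{(i)}/m\to\infty$. The extra ingredient is that the lower bound in the proof of Proposition~\ref{theorem:preliminary} is really a conditional one: letting $\mathcal F_{D_{m-1}^{(i)}}$ be the history up to the start of user $i$'s $m$th transmission period, that proof shows $\Pr[T_m^{(i)}>t\mid\mathcal F_{D_{m-1}^{(i)}}]\ge\Pr[\underline T>t]$ for all $t$, because it only uses the memorylessness of the exponential arrival/backoff clocks (which restart at $D_{m-1}^{(i)}$ regardless of the past) and the fact that the $m$th packet of user $i$ is an independent copy of $L$. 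Hence, by the quantile coupling applied conditionally, one can construct i.i.d.\ variables $\underline T_1,\underline T_2,\dots$, each distributed as $\underline T$, with $\underline T_m\le T_m^{(i)}$ a.s. Since $\expect[\underline T_1]=\infty$, the strong law of large numbers in the infinite-mean case gives $\frac1m\sum_{k=1}^{m}\underline T_k\to\infty$ a.s., so $D_m^{(i)}/m\to\infty$ a.s., i.e.\ $M^{(i)}(t)/t\to 0$ a.s.\ for each $i$; summing over the finitely many users, $S(t)/t\to 0$ a.s.

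The step I expect to be the real obstacle is the conditional minorization in the zero-throughput case: one must revisit the proof of Proposition~\ref{theorem:preliminary} and verify that its lower bound on $T_m^{(i)}$ holds conditionally on the past, so that an honest i.i.d.\ sequence dominated by $\{T_m^{(i)}\}$ exists and the infinite-mean strong law applies (equivalently, one identifies the empty state as a regeneration point and checks it is hit infinitely often). Everything else is bookkeeping around the position of the power-law exponent relative to $1$; the only genuinely soft point is the choice of the notion of throughput, and the argument above can be sharpened to an exact limit in the positive case at the cost of a cycle-length estimate.
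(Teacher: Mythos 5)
Your argument is correct, and for the harder (zero-throughput) half it is essentially the paper's own route: the paper also reduces to an i.i.d.\ sandwich coupling $\underline{T}_j \le T_m^{(i)} \le \bar{T}_j$ (asserted with the phrase ``we can always construct on the same probability space\dots''), defines the associated counting processes $\underline{N}^{(i)}(t) \le N^{(i)}(t) \le \overline{N}^{(i)}(t)$, and invokes the renewal SLLN, with $\expect[\underline{T}]=\infty$ when $\mu<(M-1)\nu$ forcing $\overline{N}^{(i)}(t)/t\to 1/\expect[\underline{T}]=0$. You make explicit the one step the paper glosses over, namely that the lower bound of Proposition~\ref{theorem:preliminary} must hold conditionally on the past (via memorylessness of the clocks and independence of the fresh packet) for the i.i.d.\ minorizing sequence to exist; that is indeed the substantive point, and your identification of it is accurate. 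Where you genuinely diverge is the positive-throughput half: the paper again uses the i.i.d.\ majorizing coupling to get the stronger conclusion $\varliminf_t N^{(i)}(t)/t \ge 1/\expect[\bar{T}]>0$, whereas your Fatou argument avoids any pathwise coupling but only delivers $\varlimsup_t S(t)/t>0$ along a random subsequence. That trade-off is legitimate (and you flag it), but note that the coupling you already need for the zero-throughput direction would, applied symmetrically with $\bar{T}_j \ge T_m^{(i)}$, give you the paper's stronger rate statement at no extra cost, so the Fatou detour buys little here. A minor bookkeeping point: your $\lambda\vee\nu=\nu$ and $\lambda\wedge\nu=\nu$ identifications under the respective hypotheses, and the placement of the power-law exponent relative to $1$, all match equations (\ref{eq:preTmUp2}) and (\ref{eq:preTmLow4}).
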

\begin{proof}
Let $N(t)\eqdef \min\{j: \sum_{m=1}^{j}T_m \leq t \}$ be the
counting process for the number of successfully transmitted packets
observed at the receiver from time $0$ until time $t$. By the same
fashion, we can define the counting process $N^{(i)}(t)\eqdef
\min\{j: \sum_{m=1}^{j}T_m^{(i)} \leq t \}$ for user $i, 1\leq i\leq
M$, which represents the number of successfully transmitted packets
observed at user $i$ from time $0$ until time $t$. Clearly,  we have
\begin{equation}\label{eq:throughput}
  N(t) = \sum_{i=1}^{M}N^{(i)}(t)
\end{equation}
where $N(t), N^{(i)}(t)$ all go to infinity almost surely as $t\to
\infty$.

  Recalling the proof corresponding to $\bar{T}$ and
  $\underline{T}$ in Proposition~\ref{theorem:preliminary}, we can always
construct on the same probability space
$\left\{\bar{T}_j\right\}_{j\ge 1}$ and
$\left\{\underline{T}_j\right\}_{j\ge 1}$,  two sequences of
  i.i.d. copies of $\{\bar{T}\}$ and
  $\{\underline{T}\}$,  such that $\underline{T}_j
\leq T_m^{(i)} \leq \bar{T}_j$. Define
 $\overline{N}^{(i)}(t)\eqdef \min\{j: \sum_{m=1}^{j}\underline{T}_m^{(i)}
\leq t \}$ and
 $\underline{N}^{(i)}(t)\eqdef \min\{j: \sum_{m=1}^{j}\overline{T}_m^{(i)}
\leq t \}$ for user $i, 1\leq i\leq M$. By the preceding
definitions,  we can easily obtain
\begin{equation}  \label{eq:throuputSanwitch}
\underline{N}^{(i)}(t) \leq  N^{(i)}(t) \leq  \overline{N}^{(i)}(t).
\end{equation}


Thus, if $\lambda\leq \nu$ and $\mu >  (M-1) \nu$, then
\begin{align}\label{eq:throughput1}
  \varliminf_{t \to \infty} \frac{N^{(i)}(t)}{t} &\geq
  \varliminf_{t \to \infty} \frac{\underline{N}^{(i)}(t)}{t} = \frac{1}{\expect\left[\bar{T}\right]} > 0,
  \end{align}
  since $\bar{T}$ has a power law tail with index greater than one ($\expect\left[\bar{T}\right]< \infty$)
  by Proposition \ref{theorem:preliminary}.

If $\lambda\geq \nu$ and $\mu<(M-1) \nu$, then
\begin{align}\label{eq:throughput2}
  \varlimsup_{t \to \infty} \frac{N^{(i)}(t)}{t} &\leq
  \varlimsup_{t \to \infty} \frac{\overline{N}^{(i)}(t)}{t} = \frac{1}{\expect\left[\underline{T}\right]} = 0,
  \end{align}
since  $\underline{T}$ has a power law tail with index smaller than
one ($\expect\left[\bar{T}\right]= \infty$) by Proposition
\ref{theorem:preliminary}.

 Combining (\ref{eq:throughput}), (\ref{eq:throughput1}) and
 (\ref{eq:throughput2}), we finish the proof.
\end{proof}

\begin{proposition}\label{theorem:steadyUpper}
For an ALOHA system with finite size packets at $t=0$ and under
condition (\ref{eq:condition}) on asymptotically exponential packet
sizes, there exist $\hat{N}$ and $\hat{T}$ such that the number of
transmissions $N_m$ and the transmission time $T_m$ satisfy
$$N_m \leqd \hat{N},  \quad T_m \leqd
\hat{T}$$ with
\begin{equation}\label{eq:propUpperN}
  \lim_{n \rightarrow \infty} \frac{\Pr[\hat{N} >n]}{\log n}  =
  \lim_{t \rightarrow \infty} \frac{\Pr[\hat{T} >t]}{\log t} =- \frac{\mu}{(M-1)
  \nu}.
\end{equation}
\end{proposition}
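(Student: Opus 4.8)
The plan is to split the interval $(D_{m-1},D_m]$ at the first instant $\sigma_{m-1}$ the system becomes \emph{full} (all $M$ users backlogged, a collision occurring), handling the first segment with Lemma~\ref{lemma:Nfm} and the second by a ``saturated ALOHA'' computation. The first thing I would record is a structural observation: since the receiver sees no departure during $(D_{m-1},D_m)$, no user empties its queue there, so the only fresh packet generated in $(D_{m-1},D_m)$ is the one created by the user that just succeeded at $D_{m-1}$, an independent copy of $L$. Hence the system becomes full at most once in $(D_{m-1},D_m)$, and when it does (for $m\ge 2$) the $M$ packets present are the $M-1$ finite packets carried from $D_{m-1}$ together with that fresh copy of $L$. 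This is exactly what will make the final bound uniform in $m$ and in the packet sizes at $D_{m-1}$.

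Next I would write $N_m\le N^f_{m-1}+W_m$, where $N^f_{m-1}$ is the quantity of Lemma~\ref{lemma:Nfm} (which is $O(e^{-\eta\sqrt n})$ uniformly, with a constant not depending on the packet sizes), and $W_m$ is the number of transmission attempts in the saturated phase $(\sigma_{m-1},D_m]$ (set $W_m\eqdef 0$ if the system never fills before $D_m$). In the saturated phase the system stays full until the next success, and by the memoryless property every attempt is initiated by a uniformly chosen user while the $M$ packet lengths $\{\ell_j\}$ stay fixed; thus, conditionally on $\{\ell_j\}$, the number of attempts until success is geometric with parameter $\frac1M\sum_j e^{-\ell_j(M-1)\nu}\ge \frac1M e^{-\ell_1(M-1)\nu}$, where $\ell_1$ is the fresh copy of $L$. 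Unconditioning on that single copy gives, uniformly in $m$ and in the other packet sizes,
\begin{equation}
\Pr\left[W_m>n\right]\le q(n)\eqdef \expect\left[\left(1-\tfrac1M e^{-L(M-1)\nu}\right)^n\right].\nonumber
\end{equation}
I would then estimate $q(n)$ exactly as in the proof of Proposition~\ref{theorem:preliminary}: using (\ref{eq:condition}) to sandwich $L$ between exponentials of rate $\mu\pm\epsilon$ off a bounded set (at the cost of exponentially small remainders), the identity $e^{-(\mu\pm\epsilon)L}\eqd U$, and the asymptotics (\ref{eq:unif2power}) for $\expect[e^{-\theta U^{1/\alpha}}]$, one gets $\lim_n \log q(n)/\log n = -\mu/((M-1)\nu)$.

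Combining the two segments, $\Pr[N_m>n]\le \Pr[N^f_{m-1}>n/2]+q(n/2)$ uniformly in $m$; since $q$ decays only polynomially while $e^{-\eta\sqrt{n}}$ decays faster, the right-hand side has logarithmic tail exponent $-\mu/((M-1)\nu)$, and defining $\hat N$ by $\Pr[\hat N>n]\eqdef \min\{1,\ Ce^{-\eta\sqrt{n/2}}+q(n/2)\}$ for a suitable $C$ yields a genuine random variable with $N_m\leqd \hat N$ and with the asymptotics claimed in (\ref{eq:propUpperN}). For $\hat T$ I would use the same two-phase split together with the representation and union/Chernoff estimates of Proposition~\ref{theorem:preliminary}: each of the (stretched-exponentially many) pre-full attempts and each of the ($W_m$-many) saturated-phase attempts contributes an idle exponential period (rate $\ge\lambda\wedge\nu$) plus a truncated transmission $\le L$, so $T_m$ is dominated by a light-tailed term plus a power-law term of exponent $\mu/((M-1)\nu)$, giving $T_m\leqd \hat T$.

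The step I expect to be the main obstacle is the uniformity rather than any single estimate: the dominating pair $(\hat N,\hat T)$ must be insensitive to $m$ and to the (random, history-biased) packet sizes entering the interval. This is resolved by the two uniform inputs above — Lemma~\ref{lemma:Nfm} for reaching the full state, and the discarding of all but the single fresh copy of $L$ when lower-bounding the per-attempt success probability in the saturated phase; once these are in hand, assembling a light-tailed count with a power-law count into one dominating variable is routine. A minor separate point is $m=1$: no fresh packet need be present before the full state is reached, but the pre-$D_1$ packets are the finite initial ones, so $N_1$ and $T_1$ have merely exponential tails and are trivially dominated by $\hat N$ and $\hat T$.
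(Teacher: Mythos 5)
Your proposal is correct and follows essentially the same route as the paper: decompose at the first time the system becomes full after $D_{m-1}$, control the pre-full segment with the uniform Weibull bound of Lemma~\ref{lemma:Nfm}, bound the saturated segment by retaining only the single fresh copy of $L$ in the per-attempt success probability $\frac{1}{M}\sum_i e^{-L_i(M-1)\nu}\ge\frac{1}{M}e^{-L(M-1)\nu}$, and evaluate the resulting tail as in Proposition~\ref{theorem:preliminary}. The only (harmless) deviation is your side remark that exactly one fresh packet is generated in $(D_{m-1},D_m)$ — other users may also have been empty at $D_{m-1}$ and generate packets — but the bound you actually use only needs at least one fresh independent copy of $L$ at the full instant, which is exactly what the paper uses.
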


\begin{proof}
Recalling the definition of $N_m^f$ before Lemma~\ref{lemma:Nfm} and
using the union bound, we obtain
 \begin{align}\label{eq:p1p2}
\Pr[N_m>n]&= \Pr\left[N_m>n, N^f_{m-1}< N_m \right]
\nonumber\\
&\quad \quad +\Pr\left[N_m>n, N^f_{m-1}\geq
N_m \right] \nonumber\\
&\leq \Pr\left[N_m- N^f_{m-1}+ N^f_{m-1}> n,
N^f_{m-1}<  N_m \right]\nonumber\\
& \quad \quad + \Pr\left[N^f_{m-1}>n \right] \nonumber\\
 &\leq \Pr\left[N_m- N^f_{m-1}> \frac{n}{2},  N^f_{m-1}<
 N_m \right] \nonumber\\
 &\quad \quad +\Pr\left[ N^f_{m-1}> \frac{n}{2} \right]+  \Pr\left[N^f_{m-1}> \frac{n}{2} \right].
 \end{align}

By Lemma (\ref{lemma:Nfm}), we know that for some $0<\zeta<1$,
\begin{align}\label{eq:N1N2ab}
\Pr\left[ N^f_{m-1}> \frac{n}{2} \right]+  \Pr\left[N^f_{m-1}>
\frac{n}{2} \right] &\leq 2\Pr\left[ N^f_{m-1}> \frac{n}{2} \right]
\nonumber\\
& \leq 2\zeta^{\sqrt{\frac{n}{2}}}.
\end{align}


Observe that $N^f_{m-1}<
 N_m$ implies that there exists time $\sigma<D_m$ at which each user has a packet and
 is in backoffed status. Thus, by recalling the notation defined for Lemma~\ref{lemma:Nfm}, we  can denote the
packet sizes held by $M$ active users at time $\sigma$  by
$L_{i}^{(\sigma)}, 1\le i \le M$.  In addition, we know that right
after time $D_{m-1}$, one user has just successfully transmitted a
packet. Thus, at time $\sigma$ (when the system is full) there is at
least one new packet with size equal in distribution to $L$ in the
system. Therefore,
 we obtain
\begin{align*}
  &\Pr\left[N_m- N^f_{m-1}> \frac{n}{2},  N^f_{m-1}<
 N_m \right]\nonumber\\
 & \quad \quad \leq \expect\left[ \left(
1-\frac{1}{M}\left(  \sum_{i=1}^{M}e^{-L_i^{(D_{m}-)}(M-1)\nu}
            \right) \right)^{\lfloor\frac{n}{2}\rfloor} \right]\\
            & \quad \quad \leq \expect\left[ \left(
1-\frac{1}{M} e^{-L(M-1)\nu}
             \right)^{\lfloor\frac{n}{2}\rfloor} \right],
\end{align*}
which, in conjunction with (\ref{eq:p1p2}) and (\ref{eq:N1N2ab}),
implies, uniformly for all $m$,
 \begin{align*}
\Pr[N_m>n]
 &\leq \expect\left[ \left(
1-\frac{1}{M} e^{-L(M-1)\nu}
             \right)^{\lfloor\frac{n}{2}\rfloor} \right]+ 2\zeta^{\sqrt{\frac{n}{2}}}.
 \end{align*}

Now, we can define a  random variable $\hat{N}$ which satisfies, for
integer $n$, $\Pr[\hat{N}>n]$ is equal to
$$
 \min \left\{ 1,  \expect\left[ \left(
1-\frac{1}{M} e^{-L(M-1)\nu}
             \right)^{\lfloor\frac{n}{2}\rfloor} \right]+ 2\zeta^{\sqrt{\frac{n}{2}}}
             \right\},
$$
implying
$$
N_m \leqd \hat{N}.
$$

 By using the same approach as in calculating (\ref{eq:preliminaryUp}), we obtain
\begin{equation*}
\lim_{n\rightarrow \infty}\frac{\log \Pr[\hat{N}>n ]}{\log n} =
-\frac{\mu}{(M-1)\nu},
\end{equation*}
which finishes the proof of the result on $\hat{N}$ in equation
(\ref{eq:propUpperN}). The proof for $\hat{T}$ follows  similar
arguments as in proving the result on $T_m^{(i)}$ in Proposition
\ref{ss:pre}.
\end{proof}

Combining Theorem \ref{theorem:steadyUpper} and Corollary
\ref{cor:pre2}, we obtain the following theorem. Observe that this
theorem is slightly more general than Corollary~\ref{cor:pre2} since
it shows that $\mu>(M-1)\nu$ is enough for positive throughput,
i.e., the additional condition $\lambda\leq \nu$ in
Corollary~\ref{cor:pre2} is not needed.
\begin{theorem}\label{theorem:refinedStability}
Under condition (\ref{eq:condition}),  if $\mu>(M-1)\nu$, the ALOHA
system has a positive throughput.  Conversely, if $\lambda\geq
\nu>0$ and
   $\mu<(M-1)\nu$, then, the system has a zero throughput.
\end{theorem}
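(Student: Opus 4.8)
The plan is to obtain the converse directly from Corollary~\ref{cor:pre2} and the positive--throughput part from the receiver--side bound of Proposition~\ref{theorem:steadyUpper}, whose power--law exponent $\mu/((M-1)\nu)$ does not involve $\lambda$ (unlike the per--user bound $\bar T$ of Proposition~\ref{theorem:preliminary}, whose exponent degrades when $\lambda>\nu$). For the converse, if $\lambda\ge\nu>0$ then $\lambda\wedge\nu=\nu$, so this is exactly the second assertion of Corollary~\ref{cor:pre2}: the lower bound $\underline T$ of Proposition~\ref{theorem:preliminary} has a power--law tail with exponent $\mu/((M-1)\nu)<1$, hence $\expect[\underline T]=\infty$, and the renewal argument already carried out in that proof yields $N(t)/t\to 0$ almost surely. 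Nothing new is needed there.

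For positive throughput, fix $\mu>(M-1)\nu$. Since packet sizes are finite almost surely, Proposition~\ref{theorem:steadyUpper} applies and produces $\hat T$ with $T_m\leqd\hat T$ for all $m$ and a power--law tail of exponent $\mu/((M-1)\nu)>1$; in particular $\expect[\hat T]<\infty$. The point I would stress is that the bound behind Proposition~\ref{theorem:steadyUpper} is in fact \emph{conditional}: inspecting its proof, the estimate for $\Pr[N_m>n]$, and likewise the one for $\Pr[T_m>t]$, holds uniformly in $m$ and for every value of the residual packet sizes present at $D_{m-1}$ --- that is, conditionally on the history $\mathcal F_{m-1}$ of the system up to the $(m-1)$st departure (this is precisely the content of Lemma~\ref{lemma:Nfm} for the term $N^f_{m-1}$, while the remaining term uses only a single fresh packet). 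Hence $\Pr[T_m>t\mid\mathcal F_{m-1}]\le\Pr[\hat T>t]$ almost surely for all $m$.

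From this conditional domination I would build, exactly as in the construction used in the proof of Corollary~\ref{cor:pre2}, an i.i.d.\ sequence $\{\hat T_m\}_{m\ge1}$ of copies of $\hat T$, realized on a common probability space with $T_m\le\hat T_m$ almost surely for all $m$ --- via the distributional transform of $T_m$ given $\mathcal F_{m-1}$, driven by auxiliary i.i.d.\ uniform randomizers, whose composition with the quantile function of $\hat T$ dominates $T_m$ because the conditional quantile function of $T_m$ lies below that of $\hat T$. Then $\sum_{m=1}^j T_m\le\sum_{m=1}^j\hat T_m$ for every $j$, so the receiver's departure--counting process $N(t)$ (as in the proof of Corollary~\ref{cor:pre2}) dominates the renewal process $\hat N(t)\eqdef\max\{j:\sum_{m=1}^j\hat T_m\le t\}$, and the renewal strong law gives $\hat N(t)/t\to 1/\expect[\hat T]>0$ almost surely. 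Therefore $\varliminf_{t\to\infty}N(t)/t\ge 1/\expect[\hat T]>0$, i.e.\ the system has positive throughput, and no hypothesis relating $\lambda$ and $\nu$ was used --- the promised strengthening of Corollary~\ref{cor:pre2}.

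The main obstacle is this last passage from a ``for every $m$ / every residual packet configuration'' bound to an almost--sure domination by a single i.i.d.\ sequence: marginal stochastic domination alone would be insufficient, since the $T_m$ may be strongly dependent, so one has to genuinely exploit the conditional form of the Proposition~\ref{theorem:steadyUpper} estimate. The same subtlety is already present, and handled in the same manner, in the proof of Corollary~\ref{cor:pre2}; an additional point to verify carefully is that the ``$\hat T$'' half of Proposition~\ref{theorem:steadyUpper}, only sketched there, retains this conditional uniformity. The remaining ingredients --- the tail--exponent computation for $\hat T$ and the renewal law of large numbers --- are routine.
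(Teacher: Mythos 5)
Your proposal follows essentially the same route as the paper, which likewise dispatches the converse via the second statement of Corollary~\ref{cor:pre2} and obtains positive throughput from Proposition~\ref{theorem:steadyUpper} combined with the renewal/coupling argument of Corollary~\ref{cor:pre2} (details the paper omits). Your elaboration of the coupling step --- that the domination $T_m\leqd\hat T$ must be read conditionally on the history at $D_{m-1}$ in order to manufacture an i.i.d.\ dominating sequence, marginal domination being insufficient --- is a correct and worthwhile filling-in of exactly the gap the paper waves away.
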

\begin{remark}
For the critical case $\mu = (M-1)\nu$, if $L$ has an exact
exponential tail, i.e.,  $\Pr[L>x] \sim ce^{-\mu x}$ and $\lambda\ge
\nu$, then, the limiting distributions of $N_m^{(i)}$ and
$T_m^{(i)}$ would have exact power law tails of index $1$, and
therefore, have infinite means.
\end{remark}
\begin{remark}
The condition $\lambda\ge \mu$ and
   $\mu < (M-1)\nu$ yields a zero
   throughput. However, it appears that one
could obtain a positive throughput by decreasing $\lambda$ for fixed
$\nu$ and $\mu$ in this case. Specifically, we conjecture that the
throughput of the system is positive
 when  $\lambda$ is small enough and $M\mu>(M-1)\nu>\mu$.
\end{remark}

\begin{proof}
The second statement of this theorem is the same as the second
statement of Corollary~\ref{cor:pre2}. Given
Proposition~\ref{theorem:steadyUpper}, the first statement can be
easily derived using basically the same arguments as in the proof of
Corollary~\ref{cor:pre2}, and thus we omit the details.
\end{proof}

\section{Approximation of the Distributions of $N_m$ and
$T_m$}\label{section:distributional}
\subsection{Starting Behavior}\label{ss:starting}

In this subsection, we study  the number of retransmissions $N_m$
and the transmission delay $T_m$ for the $m$th successfully
transmitted packet observed at the receiver when the system starts
from an empty state. This result characterizes the starting behavior
of our ALOHA model for small (finite) $m$. Furthermore, since ALOHA
tends to accumulate with time longer packets, it would make sense to
define a modified ALOHA which,  after a finite (possibly large)
number of successful transmissions, refreshes itself by discarding
all the packets currently present in the system.  Hence, for this
modified ALOHA, the following theorem describes the steady state
behavior as well.

\begin{theorem}\label{theorem:transient}
Under condition (\ref{eq:condition}), assume that at time $t=0$ the
system is empty $U(0)=0$,  then, for any fixed $m\geq M$,   the
number of transmissions $N_m$ and the transmission time $T_m$
 satisfy
\begin{equation}\label{eq:propLowerN}
 \lim_{n \rightarrow \infty} \frac{\Pr[N_m >n]}{\log n}
   =\lim_{t \rightarrow \infty} \frac{\Pr[T_m >t]}{\log t}= - \frac{M\mu}{(M-1)
   \nu}.
\end{equation}
\end{theorem}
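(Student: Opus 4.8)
The plan is to prove matching logarithmic power‑law bounds for $N_m$ and then transfer them to $T_m$ by the same thinning/Chernoff argument already used for $T_m^{(i)}$ in Proposition~\ref{theorem:preliminary}. Two structural facts drive everything. First, because $U(0)=0$ and $m$ is fixed, the number of distinct packets ever generated up to time $D_m$ is at most $K\eqdef M+m$ (each user that ever becomes active contributes one packet, plus one per successful transmission of its own), and these are i.i.d.\ copies of $L$; call them $L^{(1)},\dots,L^{(K)}$. Second, on the interval $I\eqdef(D_{m-1},D_m]$ there is, by definition, no successful transmission, so the number of active (backlogged or transmitting) users is nondecreasing on $I$, no user that is already active on $I$ acquires a fresh packet, and every packet transmitted during $I$ is one of the $L^{(i)}$.

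For the \emph{upper bound}, let $\sigma$ be the first instant in $I$ at which the system is full. On the event that $\sigma$ exists, the number $n_1$ of (re)transmissions in $(D_{m-1},\sigma)$ equals $N^f_{m-1}$, so $\Pr[n_1>\sqrt n]=O(e^{-\eta n^{1/4}})$ by Lemma~\ref{lemma:Nfm}; on the complementary event the number of idle users is nonincreasing on $I$ and stays at least $1$, so some fixed user is continuously idle over a terminal subinterval of $I$, whose length is stochastically dominated by an exponential variable of rate $\lambda$, which together with a Chernoff bound on the number of (re)transmissions that can occur in a time window of given length gives $\Pr[N_m>n,\ \sigma\text{ nonexistent}]=O(e^{-cn})$. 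It remains to bound $\Pr[N_m>n,\ n_1\le\sqrt n]$, on which $n_2\eqdef N_m-n_1$ (the (re)transmissions on $[\sigma,D_m]$) is at least $n/2$ and consists of a streak of collisions followed by the single success $D_m$. Throughout $[\sigma,D_m]$ the system is full and the $M$ packets present, $P_1,\dots,P_M$, are frozen; by memorylessness of the backoff clocks each round picks the transmitting user uniformly among the $M$ users and, given the $P_j$'s, is a collision with probability $1-e^{-(M-1)\nu P_j}$, so the probability of $r$ consecutive collisions is exactly $\bigl(1-\tfrac1M\sum_{j=1}^M e^{-(M-1)\nu P_j}\bigr)^r\le\prod_{j=1}^M e^{-\frac rM e^{-(M-1)\nu P_j}}$. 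Taking $r=n_2-1\ge n/3$, unconditioning, and bounding by the union over the $\binom{K}{M}$ choices of which $L^{(i)}$ are the frozen packets,
\[
\Pr\bigl[N_m>n,\ n_1\le\sqrt n\bigr]\ \le\ \binom{K}{M}\Bigl(\expect\bigl[e^{-\frac{n}{3M}e^{-(M-1)\nu L}}\bigr]\Bigr)^{M}.
\]
Since $\log\expect[e^{-\theta e^{-(M-1)\nu L}}]/\log\theta\to-\mu/((M-1)\nu)$ as $\theta\to\infty$ by (\ref{eq:condition}) and (\ref{eq:unif2power}), the right‑hand side is $n^{-M\mu/((M-1)\nu)+o(1)}$, and collecting the three contributions yields $\varlimsup_{n\to\infty}\frac{\log\Pr[N_m>n]}{\log n}\le -\frac{M\mu}{(M-1)\nu}$.

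For the \emph{lower bound}, fix $\epsilon>0$ and consider the event that the first $m-1$ successes are carried out by one tagged user while the remaining $M-1$ users stay idle (a positive‑probability event, $m$ being fixed), that the tagged user's next packet exceeds $(1+\epsilon)\log n/((M-1)\nu)$, and that during that long transmission each of the other $M-1$ users generates a packet also exceeding $(1+\epsilon)\log n/((M-1)\nu)$ and attempts, colliding with the tagged user so that the system becomes full with $M$ frozen packets all of this size. On this event every subsequent round is a collision with probability at least $1-n^{-(1+\epsilon)}$, hence with probability $1-o(1)$ there are at least $n$ collisions before the next success, i.e.\ $N_m>n$; by (\ref{eq:condition}) the event itself has probability at least $n^{-M(1+\epsilon)(\mu+\epsilon)/((M-1)\nu)-o(1)}$, and letting $\epsilon\to0$ gives $\varliminf_{n\to\infty}\frac{\log\Pr[N_m>n]}{\log n}\ge -\frac{M\mu}{(M-1)\nu}$. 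The statement for $T_m$ then follows as in Proposition~\ref{theorem:preliminary}: for the upper bound $T_m\le\sum_{j=1}^{N_m}(\xi_j+L_{(j)})$ with $\xi_j\leqd$ an exponential of rate $\lambda\wedge\nu$ and the $L_{(j)}$ among the $L^{(i)}$, and a union/Chernoff split reduces $\Pr[T_m>t]$ to $\Pr[N_m>ct]$; for the lower bound $T_m\ge\sum_{j=1}^{N_m-1}\xi_j\geqd\sum_{j=1}^{N_m-1}\underline\xi_j$ with $\underline\xi_j$ i.i.d.\ exponential of rate $\lambda\vee\nu$, again reducing to $\Pr[N_m>ct]$.

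The main obstacle is the upper bound, and specifically explaining why the exponent here is $M$ times the steady‑state one of Proposition~\ref{theorem:steadyUpper}. A single atypically long packet cannot by itself force $N_m>n$: a backlogged user holding an ordinary‑size packet would, with constant per‑round probability, transmit it successfully and thereby terminate $I$ long before $n$ retransmissions accrue. Capturing this rigorously requires the \emph{exact} per‑round collision probability on the full segment $[\sigma,D_m]$, whose product form $\prod_j e^{-(n/M)e^{-(M-1)\nu P_j}}$ forces \emph{all} $M$ frozen packets to be large, together with the fact --- available only because the chain starts empty and $m$ is fixed --- that those packets come from a pool of just $K=M+m$ i.i.d.\ copies of $L$, so that the union bound over subsets costs only the constant $\binom{K}{M}$. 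Controlling the two "non‑full" contributions ($N^f_{m-1}$ via Lemma~\ref{lemma:Nfm}, and the case $\sigma$ never occurs via the persistently‑idle user) is comparatively routine but must be carried out so that they are negligible against $n^{-M\mu/((M-1)\nu)}$.
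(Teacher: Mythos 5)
Your proof is correct and follows essentially the same route as the paper's: a positive-probability construction forcing all $M$ users to hold large packets after $D_{m-1}$ for the lower bound, and for the upper bound a split at the first full-system time controlled by Lemma~\ref{lemma:Nfm}, followed by the observation that the $M$ frozen packets are drawn from a pool of only $m+M$ i.i.d.\ copies of $L$, which yields the $M\mu$ decay rate. The only cosmetic differences are that the paper bounds the per-round success probability below via the order statistic $L^{(M)}$ rather than your union bound over $\binom{M+m}{M}$ subsets, and absorbs the ``system never becomes full'' case directly into the event $\{N^f_{m-1}\ge N_m\}$ instead of treating it separately.
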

\begin{remark}
A special case of this theorem when $U(C_m+)=M$ with all the packets
in the system  being i.i.d. and equal in distribution to $L$ was
proved in Theorem 1 of \cite{Jelen06ALOHA}.
\end{remark}
\begin{remark}
Note that this result still holds even if we allow $m$ to be a
slowly growing function of $n$ for $N_m$, e.g., $m=o(\log n)$ (or
$m=o(\log t)$ for $T_m$).
\end{remark}
\begin{remark}
 This theorem indicates that the distribution tails
of $N_m$ and $T_m$ are essentially power laws when the packet
distribution is approximately exponential ($\approx e^{-\mu x}$).
Thus, the finite population ALOHA may exhibit high variations, e.g.,
 the system has infinite average transmission time when
$0<M\mu /(M-1)\nu<1$; and when $1<M\mu /(M-1)\nu<2$, the
transmission time has finite mean but infinite variance.
 It might be worth noting that this may even occur
when the expected packet length is much smaller than the expected
backoff time $\expect L \ll 1/\nu$.
\end{remark}
\begin{proof}[Proof of Theorem \ref{theorem:transient}]
We first prove the logarithmic asymptotics for $N_m$, based on which
a similar result can be proved for $T_m$.

First, we begin with proving the \emph{lower bound} for $N_m$.  We
construct a special event with a positive probability that guides
the system from time $0$ up to time $D_{m-1}$. Denote by
$\mathscr{E}_1$ the event that only one of the users has packets to
send  and all the other $M-1$ users are empty from time $0$ through
time $D_{m-1}$; additionally, we require that the sizes of these
arriving packets be less than a constant $k-1$ with $\Pr[L\le
k-1]>0$ and that each new arrival be within a unit interval after
the previous departure. This construction implies $D_{m-1}\le
(m-1)k$, and therefore, by time $D_{m-1}$ the probability that the
system evolves according to $\mathscr{E}_1$ is lower bounded by
\begin{equation}
 \Pr[\mathscr{E}_1] \geq  \left( (1-e^{-\lambda}) \Pr[L\leq k-1] \right)^{m-1} e^{-(M-1)\lambda
 (m-1)
  k}>0.
\end{equation}

Next, immediately after time $D_{m-1}$, observe that the whole
system becomes empty according to our construction. Then, we build
another special event $\mathscr{E}_2$ that leads $M$ users to have
i.i.d. packets with sizes that are larger than $1$ in their buffers
after time $D_{m-1}$.

To this point, we require that each of the $M$ users have a packet
with size larger than $1$ arriving to the system after $D_{m-1}$ and
that their arriving points be within $[D_{m-1}, D_{m-1}+1]$. This
event happens with probability $(1-e^{-\lambda})^M \Pr[L>1]^M $.
Notice that, immediately after the $M$th packet arrives,  there are
either $M-1$ or $M$ users in the backoff status, depending on
whether the $M$th arrived packet collides with others upon arrival
or not. If the $M$th packet does not collide with others upon
arrival, we require that a retransmission occur within one unit of
time after it arrives, which happens with a probability greater than
$1-e^{-(M-1)\nu}$. These requirements can guarantee that there
exists a time $\tau \in [D_{m-1}, D_{m-1}+2)$ with $ \tau = \min
\{C_n \mid U(C_n+)=M, C_n>D_{m-1}\}$,  at which each user in the
system has a packet and is in the backoff status.  The probability
that the event $\mathscr{E}_2$ happens is lower bounded by
$$
  \Pr[\mathscr{E}_2] \geq (1-e^{-\lambda})^M  \Pr[L>1]^M (1-e^{-(M-1)\nu})>0.
$$

Now, given $\mathscr{E}_1$ and $\mathscr{E}_2$, we can denote by
$N_{m}^{\ast}$ the number of retransmissions between $(\tau, D_m]$,
implying $N_m
 \geq N_{m}^{\ast}$.  Then, recalling the notation defined before Lemma~\ref{lemma:Nfm} and defining
$L_{o} \eqdef \min
\left\{L_1^{(\tau)},L_2^{(\tau)},\cdots,L_M^{(\tau)} \right\}$,  we
obtain
\begin{align}\label{eq:I4}
  \Pr[N_{m}^{\ast} >n \mid \mathscr{E}_1, \mathscr{E}_2] &=\expect\left[ \left(
1-\frac{1}{M}\left(  \sum_{i=1}^{M}e^{-L_i^{(\tau)}(M-1)\nu}
            \right) \right)^{n} \right] \nonumber\\
  & \geq \expect \left[\left(1-e^{-L_o (M-1)\nu} \right)^n \right].
  \end{align}
It is  easy to check that the complementary cumulative distribution
function $\Pr[L_o \geq x]$ satisfies
\begin{equation*}
   \lim_{x\rightarrow \infty}\frac{\log \Pr[L_o \geq x]}{x}=-M\mu,
   \end{equation*}
which,  by using the same technique as in estimating
(\ref{eq:templateLowbound5}),  yields
\begin{equation}
 \varliminf_{n \rightarrow \infty}\frac{\log  \Pr[N_{m}^{\ast} >n \mid \mathscr{E}_1, \mathscr{E}_2]}{\log n} \geq
-\frac{M\mu}{(M-1)\nu}. \nonumber
\end{equation}
Finally,  using $\Pr[N_m>n] \geq \Pr[\mathscr{E}_1,
\mathscr{E}_2]]\Pr[N_{m}^{\ast} >n \mid \mathscr{E}_1,
\mathscr{E}_2]$ completes the proof of the lower bound for $N_m$.

Next, we proceed with the proof of the \emph{upper bound} for $N_m$.
Using the same approach as in evaluating (\ref{eq:p1p2}), we obtain
 \begin{align}\label{eq:N1N2}
\Pr[N_m>n]&\leq \Pr\left[N_m- N^f_{m-1}> \frac{n}{2},  N^f_{m-1}<
 N_m \right]+
2\zeta^{\sqrt{\frac{n}{2}}}.
 \end{align}

%

Now,  we observe that $N^f_m<
 N_m$ implies that there exists time $\sigma<D_m$ at which each user has a packet at hand and
 is in the backoff status. Thus,  we denote the
packet sizes held by $M$ users at time $\sigma$  by
$L_{i}^{(\sigma)}, 1\le i \le M$.  In addition, we know that at time
$\sigma$ the total number of packets, including those still present
in the system and those already successfully transmitted, is less
than $m+M$ since the system has only $M$ users. Denote the sizes of
the first $m+M$ packets arriving to the system by $\{L_1, L_2,
\cdots, L_{m+M}\}$ and its order statistics by $L^{(1)} \geq L^{(2)}
\geq \cdots \geq L^{(m+M)}$, and we obtain
\begin{align}\label{eq:hi0}
  \Pr[N_m- N^f_{m-1} & > n ,  N^f_{m-1} <
 N_m] \nonumber\\
 &\leq \expect\left[ \left(
1-\frac{1}{M}\left(  \sum_{i=1}^{M}e^{-L_i^{(\sigma)}(M-1)\nu}
            \right) \right)^{n} \right] \nonumber\\
            &\leq \expect\left[ \left(
1-  \frac{1}{M}e^{-L^{(M)}(M-1)\nu}
             \right)^{n} \right].
\end{align}
 Since
$L^{(M)}$ is the $M$th largest value among $L_i, 1\leq i \leq m+M$,
we know
$$
     \lim_{x \to \infty} \frac{\Pr[L^{(M)}>x]}{x} = -M \mu.
$$
Then, by (\ref{eq:N1N2}), (\ref{eq:hi0}) and using the same approach
as in estimating (\ref{eq:preliminaryUp}), one derives
\begin{align}\label{eq:upperN}
\varlimsup_{n\rightarrow \infty}\frac{\log \Pr[N_m>n ]}{\log n} \leq
-\frac{M\mu}{(M-1)\nu},
\end{align}
which completes the proof of the upper bound.

The proof for the logarithmic asymptotics of $T_m$ is based on
similar arguments as in proving $T_m^{(i)}$ in
Proposition~\ref{theorem:preliminary} and, thus, we omit the
details.

\end{proof}

\subsection{Limiting Steady State Behavior}\label{ss:steadyState}

When the system keeps running for a long period of time,  we can
show that the preceding upper bound,  presented in
Theorem~\ref{theorem:steadyUpper}, is attainable when
$\lambda=\nu>\mu/(M-1)$.   In order to study this situation, first
we establish the following lemma that characterizes the growth of
the packet sizes in the system immediately after a departure at time
$D_m$.  Noting that $\lambda=\nu$, we can assume that once a user
successfully transmits a packet through the channel, it immediately
generates a new packet in its buffer and goes into the backoff
state, i.e., we can interpret that the arrival and departure happen
at the same time. Therefore, the system evolves as if it always had
$M$ packets available and all users remained in the backoff state
over the entire operation. Let $\underline{L}(D_m)$ be the minimum
of the packet sizes of the other $M-1$ users except the one
departing at time $D_m$.

\begin{lemma}\label{lemma:heavyAssmp}
Assume that $\lambda=\nu>\mu/(M-1)$ and
\begin{equation}\label{eq:steadyCondition}
\varlimsup_{y\to \infty}\sup_{\delta y < x < y} \frac{1}{y-x} \log
\left( \frac{\Pr[L>x]}{\Pr[L>y]} \right) \leq \mu
\end{equation}
for $0<\delta<1$.  Then, there exists $p>0$ such that for any fixed
$y$,
\begin{equation}
\varliminf_{m \to \infty}\Pr\left[\underline{L}(D_m) > y \right]
>p.
\end{equation}
\end{lemma}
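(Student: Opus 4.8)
The plan is to use the reduction described just before the lemma: because $\lambda=\nu$, when the system is observed only at the departure epochs $D_1,D_2,\dots$ it is a discrete-time Markov chain whose state is the unordered $M$-tuple $(\ell_1,\dots,\ell_M)$ of packet sizes in the backoff state. By the memoryless property, at a departure epoch all $M$ backoff clocks are fresh exponential$(\nu)$, and the next departing user is the one whose clock rings first and whose transmission is then not interrupted; hence, given the sizes, the departing slot is $i$ with probability proportional to $w(\ell_i):=e^{-(M-1)\nu\ell_i}$, and that slot is immediately refilled by an independent copy of $L$. In this picture $\underline L(D_m)$ is the minimum of the $M-1$ coordinates that are \emph{not} chosen at step $m$, and since $w$ is strictly decreasing the chosen slot is biased toward the smallest coordinate. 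This gives a first reduction: if, just before $D_m$, at least $M-1$ of the coordinates exceed $y$, then the (at most one) coordinate that is $\le y$ has weight $\ge w(y)$ while each of the big coordinates has weight $\le w(y)$, so that small coordinate is the one resampled with probability $\ge 1/M$; on that event $\underline L(D_m)>y$. Hence
\[
\Pr[\underline L(D_m)>y]\ \ge\ \tfrac{1}{M}\,\Pr\big[\text{at least }M-1\text{ of the }M\text{ packets just before }D_m\text{ exceed }y\big],
\]
and it suffices to bound the right-hand probability below by a universal constant, uniformly over large $m$.

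The mechanism behind this bound is the \emph{accumulation of long packets}. A packet of size $\ell$ is resampled at a given step with probability $w(\ell)/\sum_k w(\ell_k)$, which, as long as some packet of size $\le K$ is present, is at most $e^{(M-1)\nu K}e^{-(M-1)\nu\ell}$; thus a packet of size $\ell$ typically survives of the order of $e^{(M-1)\nu\ell}$ departure epochs. On the other hand, by (\ref{eq:condition}) the largest size generated during the first $m$ epochs exceeds $(1-\epsilon)(\log m)/\mu$ with probability tending to $1$, and since $(M-1)\nu>\mu$ a packet of that size would survive $\approx m^{(M-1)\nu/\mu}\gg m$ epochs; so the few longest packets ever generated are, with high probability, still present at epoch $m$. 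To make this precise I would fix a large threshold $K\ge y$, follow the $M-1$ slots that have gone longest without being resampled, and show that with probability bounded away from $0$ uniformly in $m$ each of them holds a packet that was generated with size $>K$ and has not been touched since; equivalently, one shows that the $(M-1)$-st order statistic of the state drifts to $+\infty$. A convenient packaging of the latter is a transience/Lyapunov argument: exhibit a bounded, decreasing test function $g$ of the $(M-1)$-st smallest coordinate with $\expect[g(\text{next state})\mid\text{current state}]\le g(\text{current state})$ for all states outside a compact set, and conclude that this order statistic tends to infinity. The tail-regularity hypothesis (\ref{eq:steadyCondition}) is what makes the drift inequality hold: it bounds the ratios $\Pr[L>x]/\Pr[L>y]$ that govern how often the ``short'' cycling slot produces a moderately long packet, and hence keeps the downward drift of that order statistic strictly below its upward drift.

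The step I expect to be the main obstacle is making the ``long packets survive'' estimate uniform over the randomly evolving sizes of the \emph{other} slots. The per-step survival probability $1-w(\ell)/\sum_k w(\ell_k)$ is only as good as the lower bound on $\sum_k w(\ell_k)$, which deteriorates when every packet present is long; but that exceptional situation is itself favourable, since then all $M$ packets exceed $y$, so a short case split over ``at least one short packet present'' versus ``all packets long'' closes the loop. Simultaneously one must control the accumulated probability of the bad event that too many of the long slots are resampled before epoch $m$: this is a sum of the individually tiny probabilities $w(\ell)$ over the long slots, and keeping its total small over a window of length comparable to $m$ — together with the observation that a resampled long slot must ``re-grow'' past $K$ before it can contribute again, which is controlled by (\ref{eq:steadyCondition}) rather than merely by (\ref{eq:condition}) — is the quantitative heart of the argument.
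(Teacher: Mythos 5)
You have the right mechanism (the embedded departure chain with resampling weights proportional to $e^{-(M-1)\nu\ell_i}$, and the accumulation of long packets that are rarely the ones chosen to depart), and your opening reduction --- if at least $M-1$ of the $M$ packets just before $D_m$ exceed $y$, then the short one departs with probability at least $1/M$, so $\underline L(D_m)>y$ --- is a clean observation. But the core of the proof, namely a lower bound on the probability of that accumulation event that is uniform in $m$, is exactly the part you leave as a sketch, and you flag it yourself as ``the quantitative heart.'' The paper closes this step with a concrete, finite-window computation that your plan does not contain: it looks only at the last $w\approx (M-1)/\Pr[L>y]$ departures before $D_m$; in that window $M-1$ arrivals of size $>y$ occur with probability at least $(1-2e^{-1})^{M-1}$; and, crucially, once a packet of size $>y$ is resident, the per-departure probability that \emph{it} is the one displaced is bounded by $\expect\bigl[e^{-(M-1)\nu y}/(e^{-(M-1)\nu L}+e^{-(M-1)\nu y})\,\ind(L\le y)\bigr]$, which an integration by parts together with condition (\ref{eq:steadyCondition}) shows to be $O(\Pr[L>y])$. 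Raising $1-O(\Pr[L>y])$ to the power $w=O(1/\Pr[L>y])$ then gives a constant bounded away from zero. This is the only place (\ref{eq:steadyCondition}) is actually used, and it is precisely the estimate your proposal names but does not perform.

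Your alternative packaging --- follow packets of size about $(\log m)/\mu$ over the entire history and run a Lyapunov/transience argument for the $(M-1)$-st order statistic --- aims at the stronger conclusion $\Pr[\underline L(D_m)>y]\to 1$ (which the paper only conjectures in a remark) and is not equivalent to the ``probability bounded away from zero'' claim you state two lines earlier; conflating the two hides real work. A drift inequality for that order statistic would itself require controlling how often the cycling short slot regenerates a moderately long packet and how the displacement probabilities degrade when several coordinates are large, i.e., the same estimate as above plus additional bookkeeping over a window of length $m$ rather than $1/\Pr[L>y]$. So the proposal is a reasonable roadmap in the same spirit as the paper, but as written it has a genuine gap at the decisive step, and the route chosen to fill it is harder than the one the paper takes.
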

\begin{remark}
We believe that a stronger result  $\varliminf_{m \to
\infty}\Pr\left[\underline{L}(D_m) > y \right] =1$ for all $y$ is
also true, but the preceding lemma suffices for our proofs.
Furthermore, a careful examination of our proof shows that the
result is also true for $\min \{\lambda,\nu \}>\mu/(M-1)$, but we
avoid this generalization due to considerable notational
complications.
\end{remark}
\begin{remark}
It is easy to see that condition (\ref{eq:steadyCondition}) holds
for a broad range of distributions from exponential family, e.g.,
Gamma distribution,  $e^{-\mu x}e^{\gamma x^{\beta}}$ with
$0<\beta<1$, etc.
\end{remark}

The \textbf{proof} of Lemma~\ref{lemma:heavyAssmp} is presented in
Section~\ref{sc:ALOHAproof}.  By using this lemma,  we can derive
the following theorem that characterizes the limiting steady state
behavior of our ALOHA model.

\begin{theorem}\label{theorem:steadyLower}
Under condition (\ref{eq:steadyCondition}), if
$\lambda=\nu>\mu/(M-1)$, we obtain
\begin{align}\label{eq:steadyLower}
 \lim_{n \to \infty} \underset{m \rightarrow \infty}{\underline{\varlimsup}} \frac{\Pr[N_m >n]}{\log n} &  =
 \lim_{t \to \infty} \underset{m \rightarrow \infty}{\underline{\varlimsup}} \frac{\Pr[T_m >t]}{\log t} \nonumber\\
 & =- \frac{\mu}{(M-1)
  \nu}.
\end{align}
\end{theorem}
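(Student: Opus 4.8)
The plan is to establish the two matching logarithmic bounds separately. The \emph{upper bound} is immediate: Proposition~\ref{theorem:steadyUpper} supplies $\hat N,\hat T$ with $N_m\leqd\hat N$, $T_m\leqd\hat T$ uniformly in $m$ and $\log\Pr[\hat N>n]/\log n\to-\mu/((M-1)\nu)$, $\log\Pr[\hat T>t]/\log t\to-\mu/((M-1)\nu)$. Since then $\frac{\log\Pr[N_m>n]}{\log n}\le\frac{\log\Pr[\hat N>n]}{\log n}$ for every $m$, both $\varlimsup_m$ and $\varliminf_m$ of the left-hand side lie below the right-hand side, and letting $n\to\infty$ (and likewise for $T_m$) gives the ``$\le$'' half of \eqref{eq:steadyLower} for both readings of $\underline{\varlimsup}$. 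So all the work is in the lower bound, and the key input will be Lemma~\ref{lemma:heavyAssmp}.

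For the lower bound I would first reduce to a single inter-departure interval. Since $\lambda=\nu$, adopt the convention of Lemma~\ref{lemma:heavyAssmp} that a departing user instantly regenerates a packet and re-enters backoff, so that all $M$ users are permanently backlogged and, at every instant the channel is free, their residual backoff clocks are i.i.d.\ $\mathrm{Exp}(\nu)$ by memorylessness. Between two consecutive successes only collisions occur and colliding users keep their packets, so the multiset of the $M$ packet sizes is constant on $(D_m,D_{m+1}]$; denote it $\{S_u\}_{u=1}^{M}$, where one entry is the fresh packet of the user that departed at $D_m$ (equal in distribution to $L$ and independent of the pre-$D_m$ history) and the other $M-1$ are the surviving sizes whose minimum is $\underline L(D_m)$. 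The same memoryless bookkeeping that produced \eqref{eq:I4} then gives, exactly,
\[
\Pr\bigl[N_{m+1}>n\mid\{S_u\}\bigr]=\Bigl(1-\tfrac1M\sum_{u=1}^{M}e^{-(M-1)\nu S_u}\Bigr)^{n}.
\]

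Next I would fix $n$, put $s_n\eqdef\log n/((M-1)\nu)$, and restrict to the event $A_{s_n}\eqdef\{L>s_n\}\cap\{\underline L(D_m)>s_n\}$, on which every $S_u>s_n$, hence $\frac1M\sum_u e^{-(M-1)\nu S_u}\le e^{-(M-1)\nu s_n}=1/n$ and the conditional probability above is $\ge(1-1/n)^n$. Using independence of the fresh packet, $\Pr[A_{s_n}]=\Pr[L>s_n]\,\Pr[\underline L(D_m)>s_n]$, and Lemma~\ref{lemma:heavyAssmp} — whose constant $p$ is uniform in the threshold — gives $\varliminf_{m\to\infty}\Pr[\underline L(D_m)>s_n]>p$. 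Combining and shifting the index,
\[
\varliminf_{m\to\infty}\Pr[N_m>n]\ \ge\ p\,(1-1/n)^n\,\Pr[L>s_n].
\]
Taking $\log$, dividing by $\log n$, and using $(1-1/n)^n\to e^{-1}$ together with $\log\Pr[L>s_n]/s_n\to-\mu$ (whose lower half, which is all the lower bound needs, is already forced by \eqref{eq:steadyCondition}, and which holds fully under \eqref{eq:condition}) yields $\liminf_{n\to\infty}\varliminf_{m\to\infty}\frac{\log\Pr[N_m>n]}{\log n}\ge-\mu/((M-1)\nu)$; since $\varliminf_m\le\varlimsup_m$ the same lower bound holds for the $\varlimsup_m$ version, which with the upper bound proves \eqref{eq:steadyLower} for $N_m$. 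For $T_m$ I would use that just after $D_m$, and before each of the $N_{m+1}$ attempts, the channel is idle for an $\mathrm{Exp}(M\nu)$ time, so $T_{m+1}\ge\sum_{j=1}^{N_{m+1}}E_j$ with $E_j$ i.i.d.\ $\mathrm{Exp}(M\nu)$; splitting on $\{N_{m+1}>\lceil 2Mt\nu\rceil\}$ and bounding $\Pr\bigl[\sum_{j=1}^{\lceil 2Mt\nu\rceil}E_j\le t\bigr]=O(e^{-\eta t})$ by Chernoff, exactly as in the $T_m^{(i)}$ argument of Proposition~\ref{theorem:preliminary}, transfers the tail of $N_m$ (evaluated at argument $\Theta(t)$) to $T_m$, while $T_m\leqd\hat T$ gives the upper bound.

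The main obstacle is that Lemma~\ref{lemma:heavyAssmp} only guarantees that the minimum surviving packet size exceeds a \emph{fixed} level with a \emph{fixed} positive probability $p$, rather than the conjectured ``probability $\to 1$ at every level.'' This still suffices because (i) discarding the constant factor $p$ does not change the $\log/\log$ asymptotics, and (ii) $p$ being uniform in the level lets the level $s_n$ grow logarithmically with $n$; arranging this quantification and interchanging the $m$- and $n$-limits correctly is the delicate part, whereas the continuous-time/embedded-chain identity for $\Pr[N_{m+1}>n\mid\{S_u\}]$ and the passage from $N_m$ to $T_m$ are routine extensions of arguments already used for Theorem~\ref{theorem:transient} and Proposition~\ref{theorem:preliminary}.
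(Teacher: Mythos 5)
Your proposal is correct and follows essentially the same route as the paper's proof: the upper bound is read off from Proposition~\ref{theorem:steadyUpper}, and the lower bound conditions on the event $\underline{L}(D_{m-1})>\log n/((M-1)\nu)$, whose probability stays above the uniform constant $p$ of Lemma~\ref{lemma:heavyAssmp}, so that the per-attempt collision probability is at least $1-1/n$ up to the fresh packet's contribution. The only (immaterial at the $\log/\log$ level) difference is that you also truncate the fresh packet at $s_n$ and extract $\Pr[L>s_n]$ directly, whereas the paper keeps the expectation over $L$ and reuses the power-law estimate from (\ref{eq:templateLowbound5}); your remark that the lower tail bound on $\Pr[L>s_n]$ already follows from (\ref{eq:steadyCondition}) is a nice extra precision about the hypotheses.
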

\begin{proof}
First, we prove the result for $N_m$.  The upper bound is implied by
Proposition~\ref{theorem:steadyUpper} and thus, we only need to
prove the lower bound. Recalling the definition of
$\underline{L}(D_m)$ in the paragraph before
Lemma~\ref{lemma:heavyAssmp} and using Lemma \ref{lemma:heavyAssmp},
we obtain that there exist $p>0$ and $m_0>0$ such that  for all
$m>m_0$,
\begin{equation}\label{eq:minL}
\Pr\left[ \underline{L}(D_{m-1})
> \frac{\log n}{(M-1)\nu} \right]>p.
\end{equation}

Since there is a new packet with size equal in distribution to $L$
arriving to the system at time $D_{m-1}$ (see the discussion before
Lemma~\ref{lemma:heavyAssmp}), and the packet sizes of the other
$M-1$ users are lower bounded by $\underline{L}(D_{m-1})$, we obtain
\begin{align}\label{eq:steadPn}
  \Pr[N_m >n ] &\geq \Pr\left[N_m >n,  \underline{L}(D_{m-1}) > \frac{\log n}{(M-1)\nu} \right]\nonumber\\
   &\geq \expect{\Bigg[} \left(
1-\frac{1}{M}\left(\sum_{i=1}^{M}e^{-L_i^{\left(D_{m-1}\right)}(M-1)\nu}
            \right) \right)^{n} \nonumber\\
            &\quad \quad  \times \ind\left( \underline{L}(D_{m-1})
> \frac{\log n}{(M-1)\nu} \right) {\Bigg]} \nonumber\\
  &\geq  \expect{\Bigg[} {\Bigg(}
1-\frac{1}{M}{\Bigg(}e^{-L(M-1)\nu} \nonumber\\
&\quad \quad \quad \quad \quad \quad \quad \; +
\sum_{i=1}^{M-1}e^{-\underline{L}(D_{m-1})\cdot(M-1)\nu}
            {\Bigg)} {\Bigg)}^{n} \nonumber\\
            &\quad \quad \times \ind\left( \underline{L}(D_{m-1})
> \frac{\log n}{(M-1)\nu} \right) {\Bigg]} \nonumber\\
  & \geq \Pr\left[\underline{L}(D_{m-1})
> \frac{\log n}{(M-1)\nu} \right] \nonumber\\
&\quad \quad \times \expect \left[\left(1-\frac{M-1}{M}\cdot
\frac{1}{n}-\frac{1}{M}e^{-L (M-1)\nu} \right)^n \right],
  \end{align}
  where we use the independence between the new packet size and $\underline{L}(D_{m-1})$ at time $D_{m-1}$
   in the last inequality.

Combining (\ref{eq:minL}) and (\ref{eq:steadPn})  yields, for $n$
large enough, $\Pr[N_m >n ] $ is lower bounded by
\begin{align}\label{eq:steadPn2}
  & p \left(1-\frac{M-1}{M}\cdot
  \frac{1}{n}\right)^n \nonumber\\
  &\quad \quad \times
     \expect\left[ \left( 1-\frac{1}{M(1-(M-1)/(Mn))}e^{-L(M-1)\nu} \right)^n
     \right] \nonumber\\
     &\quad \geq  p \left(1-\frac{M-1}{M}\cdot \frac{1}{n}\right)^n
     \expect\left[ \left( 1-e^{-L(M-1)\nu} \right)^n
     \right], \nonumber
  \end{align}
  which, by noting that $$\lim_{n \to \infty}\left(1-\frac{M-1}{M}\cdot
  \frac{1}{n}\right)^n = e^{-(M-1)/M}>0$$ and using the same approach as in calculating
  (\ref{eq:templateLowbound5}),
 completes the proof the lower bound.  The result on $T_m$ can be proved
  by using the same approach as in proving the result on $T_m$ in Proposition~\ref{theorem:preliminary}.
\end{proof}

\section{Power Laws in Slotted ALOHA with Random Number of Users}\label{s:slotted}


It is clear from the preceding section that the power law delays
arise due to the combination of collisions and packet variability.
Hence, it is reasonable to expect an improved performance when this
variability is reduced. Indeed, it is easy to see that the delays
are geometrically bounded in a slotted ALOHA with constant size
packets and a finite number of users. However, in this section we
will show that, when the number of users sharing the channel has
asymptotically an exponential distribution, the slotted ALOHA
exhibits power law delays as well. Situations with random number of
users are essentially predominant in practice, e.g., in sensor
networks, the number of active sensors in a neighborhood is a random
variable since sensors may switch between sleep and active modes, as
shown in Figure~\ref{fig:sensor}; similarly in ad hoc wireless
networks the variability of users may arise due to mobility, new
users joining the network, etc.

 \begin{figure}[h]
\centering
\includegraphics[width=5.5cm]{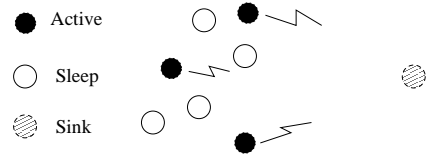}
  \caption{Random number of active neighbors in a sensor network.}
\label{fig:sensor}
\end{figure}

More formally, consider a slotted ALOHA model (e.g., see
Section~4.2.2 of \cite{BG92}) with packets/slots of unit size and a
random number of users $M\geq 1$ that are fixed over time. This
model can be viewed as a first order approximation of a real system
where the number of users change very slowly.  Similarly as in
Section~\ref{s:finitePop}, each user holds at most one packet at a
time and after a successful transmission a new packet is generated
according to an independent Bernoulli process with success
probability $1-e^{-\lambda}, \lambda>0$. In case of a collision,
each colliding user backs off according to an independent geometric
random variable with parameter $e^{-\nu}, \nu>0$. Denote the number
of slots where transmissions are attempted but failed and the total
time between two successful packet transmissions as $N$ and $T$,
respectively.

\begin{theorem}\label{theorem:slot}
   If $\lambda = \nu$ and there exists $\alpha>0$, such that
\begin{equation*}
   \lim_{x\rightarrow \infty}\frac{\log \Pr[M>x]}{x}=-\alpha,
   \end{equation*}
    then, we have
   \begin{equation}\label{eq:NT}
   \lim_{n \rightarrow \infty}\frac{\log \Pr[N>n]}{\log
   n}=\lim_{t \rightarrow \infty}\frac{\log \Pr[T>t]}{\log
   t}=-\frac{\alpha}{\nu }.
   \end{equation}
\end{theorem}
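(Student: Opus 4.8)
The plan is to exploit the hypothesis $\lambda=\nu$, which makes all users statistically interchangeable on a per-slot basis: a user holding a fresh packet transmits in the current slot with probability $1-e^{-\lambda}$, while a backlogged user retransmits with probability $1-e^{-\nu}$, and by the memoryless property of the geometric backoff these two coincide. Setting $p\eqdef 1-e^{-\nu}$, it follows that, conditionally on $M=m$, the number of transmitting users in each slot is an i.i.d.\ $\mathrm{Binomial}(m,p)$ sequence, \emph{irrespective of which users are currently backlogged and which are idle}. Hence, conditionally on $M=m$, each slot is independently a success (exactly one transmitter), a collision ($\ge 2$ transmitters), or idle (no transmitter), with probabilities $s_m\eqdef mp(1-p)^{m-1}$, $c_m\eqdef 1-(1-p)^m-s_m$, and $(1-p)^m$, and these labels are i.i.d.\ across slots. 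Consequently an inter-success interval is, conditionally on $M=m$, a geometric number of slots, which yields the closed forms
$$\Pr[N>n\mid M=m]=\Bigl(1-\tfrac{s_m}{s_m+c_m}\Bigr)^{n+1},\qquad \Pr[T>t\mid M=m]=(1-s_m)^{t},$$
so that, writing $q_m\eqdef s_m/(s_m+c_m)$ and unconditioning, $\Pr[N>n]=\expect\bigl[(1-q_M)^{n+1}\bigr]$ and $\Pr[T>t]=\expect\bigl[(1-s_M)^{t}\bigr]$. A pleasant feature is that no separate stationarity argument is needed here (unlike in the finite-population results and their $\underline{\varlimsup}$ caveat), since every inter-success interval has the same conditional law.

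Because $1-p=e^{-\nu}$, both $s_m$ and $q_m$ are $\Theta\!\bigl(m\,e^{-\nu m}\bigr)$, hence $\lim_{m\to\infty}m^{-1}\log s_m=\lim_{m\to\infty}m^{-1}\log q_m=-\nu$, and both are eventually nonincreasing in $m$. The problem is thereby reduced to the asymptotics of $\expect\bigl[(1-g(M))^{n}\bigr]$ for $g(m)$ equal to $s_m$ or $q_m$, which is exactly the type of quantity estimated in the proof of Proposition~\ref{theorem:preliminary}: $(1-g(M))^n\approx e^{-n g(M)}\approx e^{-nM e^{-\nu M}}$, whose dominant contribution comes from $M\approx \nu^{-1}\log n$, where $\Pr[M\approx \nu^{-1}\log n]\approx e^{-\alpha\log n/\nu}=n^{-\alpha/\nu}$. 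For the rigorous \emph{lower bound} I would restrict the expectation to $\{M\ge \lceil\nu^{-1}\log n+2\nu^{-1}\log\log n\rceil\}$; on this event $n\,g(M)\to 0$, so $(1-g(M))^{n+1}\ge \tfrac12$ eventually, while the event has probability at least $e^{-(\alpha+\epsilon)(\nu^{-1}\log n+o(\log n))}$ by the hypothesis $\Pr[M>x]\ge e^{-(\alpha+\epsilon)x}$ for large $x$; dividing by $\log n$, letting $n\to\infty$ then $\epsilon\to0$ gives $\varliminf_n \log\Pr[N>n]/\log n\ge-\alpha/\nu$. For the \emph{upper bound} I would split at the threshold $y_n\eqdef(1-\delta)\nu^{-1}\log n$: the part $\{M>y_n\}$ contributes at most $\Pr[M>y_n]\le e^{-(\alpha-\epsilon)(1-\delta)\log n/\nu}$, and on $\{M\le y_n\}$ the eventual monotonicity of $g$ gives $g(M)\ge g(\lfloor y_n\rfloor)=\Theta\!\bigl(n^{-(1-\delta)}\log n\bigr)$, so $(1-g(M))^{n+1}\le e^{-n g(\lfloor y_n\rfloor)}$ is super-polynomially small; combining and letting $\epsilon,\delta\to0$ gives the matching $\varlimsup$. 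The same two-sided argument with $g=s_m$ handles $T$; alternatively, since $T\ge N$ the lower bound for $T$ follows for free from that for $N$.

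The genuinely substantive step — and the one I expect to require the most care — is the reduction itself: verifying carefully that, precisely because $\lambda=\nu$, every user acts as an independent $\mathrm{Bernoulli}(p)$ source in every slot regardless of the backlog configuration, so that the success/collision/idle labels are i.i.d.\ conditionally on $M$ and $N,T$ genuinely have the conditional geometric laws above. Once that is in place, the tail estimate is essentially bookkeeping that mirrors Proposition~\ref{theorem:preliminary} almost verbatim, with the only adaptation being that the exact uniform-random-variable identity used there for exponential tails is replaced by the cruder sandwich $e^{-(\alpha+\epsilon)x}\le\Pr[M>x]\le e^{-(\alpha-\epsilon)x}$ valid for large $x$; the integer-valuedness of $M$ then introduces only bounded multiplicative corrections that do not affect the $\log/\log$ limit.
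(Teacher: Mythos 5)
Your proposal is correct and follows essentially the same route as the paper: the observation that $\lambda=\nu$ makes every user an independent per-slot Bernoulli$(1-e^{-\nu})$ transmitter regardless of backlog state is exactly the paper's reduction, and your closed forms for $\Pr[N>n\mid M]$ and $\Pr[T>t\mid M]$ coincide (up to an immaterial off-by-one in the exponent) with the paper's expressions (\ref{eq:rep}) and (\ref{eq:repT}). The paper then simply invokes ``the same arguments as in Proposition~\ref{theorem:preliminary}'' for the tail asymptotics, which is precisely the truncation/sandwich estimate you carry out explicitly.
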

\begin{remark}
Similarly as in Theorem~\ref{theorem:transient}, this result shows
that the distributions of $N$ and $T$ are essentially power laws,
i.e., $\Pr[T>t]\approx t^{-\alpha/\nu}$ and, clearly, if
$\alpha<\nu$, then $\expect N=\expect T=\infty$.
\end{remark}

\begin{proof}
Since $\lambda = \nu$, we can consider a situation where all the
users are backlogged, i.e., have a packet to send. In this case the
total number of collisions between two successful transmissions is
geometrically
 distributed given $M$,
 $$ \Pr[N>n \mid M]=\left(1-\frac{Me^{-(M-1)\nu}(1-e^{-\nu})}{1-e^{-M\nu}} \right)^n,\; n\in
 \nat,
 $$
since, given $M$, $1-e^{-M\nu}$ is the conditional probability that
there is an attempt to transmit a packet, and
$1-e^{-M\nu}-Me^{-(M-1)\nu}(1-e^{-\nu})$ is the conditional
probability that there is a collision.  Therefore,
 \begin{equation}\label{eq:rep}
  \Pr[N>n]=\expect \left[\left(1-\frac{Me^{-(M-1)\nu}(1-e^{-\nu})}{1-e^{-M\nu}} \right)^n \right].
\end{equation}
On the other hand, we have
 \begin{equation}\label{eq:repT}
  \Pr[T>t]=\expect \left[\left(1-Me^{-(M-1)\nu}(1-e^{-\nu}) \right)^t \right], \; t\in \nat.
\end{equation}
 Now, following the same arguments as in the
proof of Proposition~\ref{theorem:preliminary}, we can prove
(\ref{eq:NT}).

\end{proof}

  Actually, using part i) of Theorem~2.1 in \cite{Jelen07}, we can compute the
 exact asymptotics of $T$ under more restrictive conditions.
\begin{theorem}\label{theorem:exact}
If  $\lambda=\nu$ and $\bar{F}(x)\eqdef \Pr[M>x]$ satisfies
 $\bar{F}^{-1}(x)\thicksim  \Phi\left(e^{\nu x}(e^{\nu}x-x)^{-1}\right)$,
where $\Phi(\cdot)$ is regularly varying with index $\beta>0$, then,
as $t \to \infty$,
 \begin{equation*}\label{eq:asympN}
\Pr[T>t]\thicksim\frac{\Gamma( \beta+1 )}{\Phi(t)}.
\end{equation*}
\end{theorem}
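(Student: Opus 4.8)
The plan is to reduce the statement to part~i) of Theorem~2.1 in \cite{Jelen07} by exploiting the exact representation of $\Pr[T>t]$ that is already available from the proof of Theorem~\ref{theorem:slot}. Writing $p(m)\eqdef m e^{-(m-1)\nu}(1-e^{-\nu})$ and using $e^{-(m-1)\nu}(1-e^{-\nu})=(e^{\nu}-1)e^{-m\nu}$, one has $p(m)=m(e^{\nu}-1)e^{-m\nu}$, so that (\ref{eq:repT}) reads $\Pr[T>t]=\expect\bigl[(1-p(M))^{t}\bigr]$, $t\in\nat$. In other words, conditionally on $M$ the slot count $T$ is geometric with parameter $p(M)$, which is precisely the retransmission-induced delay analyzed in \cite{Jelen07}, with $M$ playing the role of the document size and $p(\cdot)$ the role of the per-attempt failure function.

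The observation that makes the hypothesis match is the elementary identity
\[
 e^{\nu x}\bigl(e^{\nu}x-x\bigr)^{-1}\;=\;\frac{e^{\nu x}}{x(e^{\nu}-1)}\;=\;\frac{1}{p(x)},
\]
so that the assumed asymptotics $\bar{F}^{-1}(x)\sim\Phi\bigl(e^{\nu x}(e^{\nu}x-x)^{-1}\bigr)$ is exactly the condition $\bar{F}^{-1}(x)\sim\Phi(1/p(x))$ required by part~i) of Theorem~2.1 in \cite{Jelen07}, with $\Phi$ regularly varying of index $\beta>0$. Granting that theorem, its conclusion is precisely $\Pr[T>t]\sim\Gamma(\beta+1)/\Phi(t)$, which is the assertion.

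It then remains to check that the present setting genuinely falls under the scope of that result. Since $\lambda=\nu$, one may assume all users permanently backlogged (as in the proof of Theorem~\ref{theorem:slot}), which is what legitimizes the clean geometric representation. The tail of $T$ is driven by $M\to\infty$: for $M$ in any bounded range $p(M)$ is bounded away from $0$, so those terms contribute only $O(\rho^{t})$ with $\rho<1$, negligible against the regularly varying $1/\Phi(t)$; on the unbounded range $p$ is strictly decreasing with $p(M)\to 0$, so one passes from $(1-p(M))^{t}$ to $e^{-t\,p(M)}$ up to a $1+o(1)$ factor on the relevant scale $p(M)\approx 1/t$, at which point $\Gamma(\beta+1)$ emerges as the Karamata constant, by the same mechanism as in (\ref{eq:unif2power}). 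The one point needing care is confirming that the lower-order corrections in $-\log p(M)=\nu M-\log M-\log(e^{\nu}-1)=\nu M\,(1+o(1))$ are absorbed into the slowly varying part, so that the regular-variation hypothesis inherited from $\bar{F}$ holds verbatim; I expect this bookkeeping, rather than any new estimate, to be the only real obstacle.
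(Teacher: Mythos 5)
Your proposal is correct and takes essentially the same route as the paper: the paper offers no argument beyond citing part~i) of Theorem~2.1 in \cite{Jelen07} applied to the representation (\ref{eq:repT}), and your identification $1/p(x)=e^{\nu x}/(x(e^{\nu}-1))=e^{\nu x}(e^{\nu}x-x)^{-1}$ is exactly the reduction being invoked. The additional sanity checks you supply (bounded $M$ contributing only geometrically small terms, the Karamata mechanism producing $\Gamma(\beta+1)$) are consistent with how the analogous estimates are carried out in the proof of Proposition~\ref{theorem:preliminary}.
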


\section{Simulation Examples}\label{s:NSE}
In this section, we illustrate our theoretical results with
simulation experiments. In particular, we emphasize the
characteristics of the studied ALOHA protocol that may not be
immediately apparent from our theorems. For example, in practice,
the distributions of packets and number of random users might have
bounded supports. We show that this situation may result in
truncated power law distributions for the transmission delays. To
this end, it is also important to note that the delay distribution
has a power law main body with a stretched support in relation to
the support of $L$ and $M$ and, thus, may result in very long,
although, exponentially bounded delays.

\begin{example}[Finite population model]\label{ex:1}
For the finite population model described in
Subsection~\ref{sec:description},  we compare the starting and
steady state behavior in this experiment.

\begin{figure}[h]
\centering
\includegraphics[width=8.6cm]{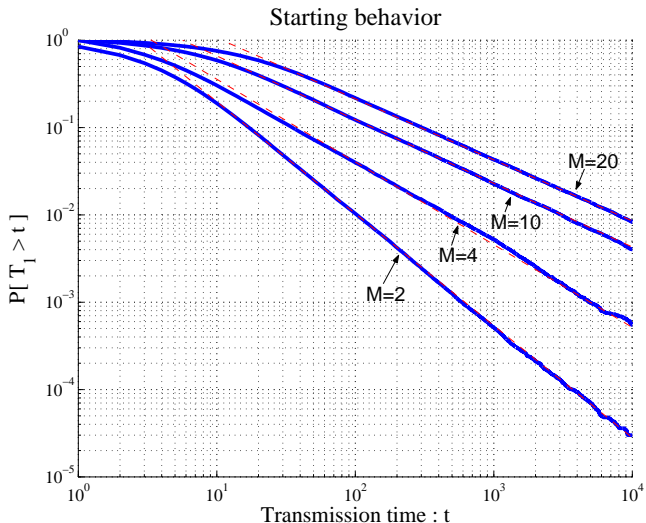}
\caption{Starting behavior: transmission time distribution for the
first successfully transmitted packet for finite population ALOHA
with variable size packets.} \label{fig:simulationR}
\end{figure}

 First, we verify Theorem~\ref{theorem:transient} on the starting behavior by plotting the empirical
distribution of time $T_1$ for the first successful transmission in
a system that is initially empty. In this regard, we conduct four
experiments for $M=2,4,10,20$ users, respectively. The packets are
assumed i.i.d. exponential with mean $1$ and the arrival intervals
and backoffs follow an exponential distribution with mean $2/3$. The
simulation experiments that each repeatedly measure $10^5$ samples
are shown in Figure~\ref{fig:simulationR}, which indicates a power
law transmission delay.  We can see from the figure that, as $M$
gets large ($M=10,20$), the slopes of the distributions that
represent the power law exponents on the $\log/\log$ plot are
essentially the same, as predicted by our
Theorem~\ref{theorem:transient}.

Next, we compare the starting behavior with the steady state
behavior predicted by Theorem~\ref{theorem:steadyLower}. In this
setting, we set $M=3$ and choose i.i.d. packet sizes that follow an
exponential distribution with mean $1$. In addition,  we assume that
arrival intervals and backoffs are exponential with mean $1.5$. The
starting behavior is represented by repeatedly measuring $10^5$
number of the transmission times for the first packet (m=1) in a
system that is initially empty and the steady state distribution is
obtained by continuously measuring the transmission times of the
packets with indexes from $m=10^5$ to $m=10^7$. The plot in Figure
\ref{fig:simulHeavy} shows that the transmission time distribution
of the first packet for the starting behavior has a slope
$-M\mu/((M-1)\nu)=-2.25$, and the steady state transmission time
distribution has a slope $-\mu/((M-1)\nu)=-0.75$, as predicted by
equations (\ref{eq:propLowerN}) and (\ref{eq:steadyLower})
 in the log-log scale, respectively.
\begin{figure}[h]
\centering
\includegraphics[width=8.6cm]{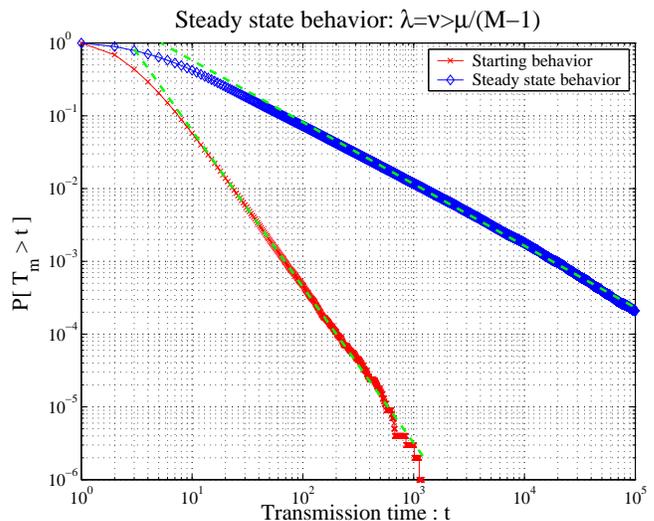}
\caption{Comparing starting behavior and steady state behavior for
finite population ALOHA with variable size packets.}
\label{fig:simulHeavy}
\end{figure}

\end{example}

\begin{example}[Random number of users]\label{ex:2}
As stated in Section~\ref{s:slotted}, the situation when the number
of users $M$ is random may cause heavy-tailed transmission delays
even for slotted ALOHA.  However, in many practical applications the
number of active users $M$  may be bounded, i.e., the distribution
$\Pr[M>x]$ has a bounded support. Thus, from equation
(\ref{eq:repT}) it is easy to see that the distribution of $T$ is
exponentially bounded. However, this exponential behavior may happen
for very small probabilities, while the delays of interest can fall
inside the region of the distribution (main body) that behaves as
the power law. This example is aimed to illustrate this important
phenomenon. Assume that initially $M\geq 1$ users have unit size
packets ready to send and $M$ follows geometric distribution with
mean $3$. The backoff times of the colliding users and the arrival
intervals of the new packets are independent and geometrically
distributed with mean $2$.
 We take the number of users to have finite support $[1,K]$ and show how this results
in a truncated power law distribution for $T$ in the main body, even
though the tails are exponentially bounded. This example is
parameterized by $K$ where $K$ ranges from $6$ to $14$ and for each
$K$ we set the number of users to be equal to $M_K=\min(M,K)$.
 \begin{figure}
\centering
\includegraphics[width=8.6cm]{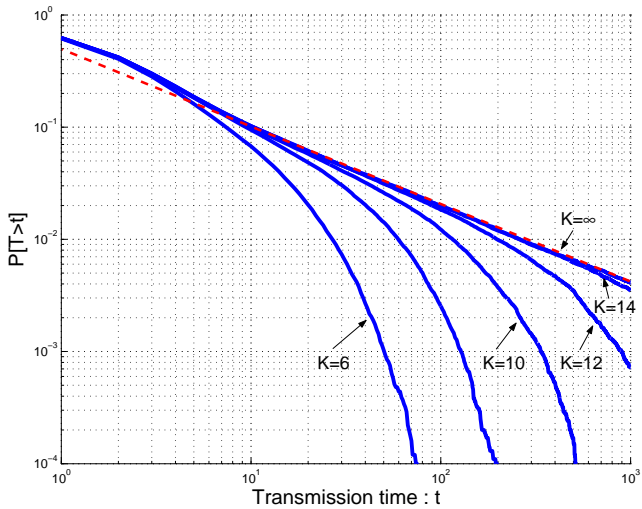}
  \caption{Illustration of the stretched support of the power law main body when the number of users is $\min(M,K)$,
  where $M$ is geometrically distributed.}
\label{fig:finiteUsers}
\end{figure}
We plot the distribution of $\Pr[T>t]$, parameterized by $K$, in
Figure~\ref{fig:finiteUsers}. From the figure we can see that, when
we increase the support of the distributions from $K=6$ to $K=14$,
the main (power law) body of the distribution of $T$ increases from
less than $5$ to almost $700$. This effect is what we call the
stretched support of the main body of $\Pr[T>t]$ in relation to the
support $K$ of $M$. In fact, it can be rigorously shown that the
support of the main body of $\Pr[T>t]$ grows exponentially fast.
Furthermore, it is important to note that, if $K= 14$ and the
probabilities of interest for $\Pr[T>t]$ are bigger than $1/500$,
then the result of this experiment is basically the same as for
$K=\infty$; see Figure~\ref{fig:finiteUsers}.
\end{example}


\section{Concluding Remarks and Further Extensions}\label{s:alohaconcluding}
In this paper, we show that a basic finite population ALOHA model
with exponential packets is characterized by power law transmission
delays, possibly even resulting in zero throughput. Based on these
results, we establish a new stability condition that is entirely
derived from the tail behavior of the packet and backoff
distributions.

 Note that at
any moment of time the finite population ALOHA model from
Subsection~\ref{sec:description} can be described as a Markov
process for the state vector $\left(L_1^{(t)}, L_2^{(t)}, \cdots,
L_M^{(t)}\right)$, where $L_i^{(t)}$ is the packet size of user $i$
at time $t$. However, this Markov process is not easy to analyze in
the sense that it has infinitely, possibly uncountably,  many states
with complicated transitions, where long packets tend to accumulate
in the system since the short ones are easier to pass. Hence we
conjecture, based on our initial simulation experiments, that in the
steady state the system may have multiple functional forms for the
power law exponent for different values of $\lambda$, $\nu$ and
$\mu$. The complete characterization of the stability of this Markov
process and the full understanding of the spatial interactions and
temporal correlations of packet sizes in the system remain a
challenging problem. In this paper, we provide a partial picture of
the system behavior. Furthermore, from an engineering perspective,
it is important to study more sophisticated MAC protocols, including
CSMA and RTS/CTS scheme, since ALOHA represents the basis for these
more practical MAC protocols.

This power law effect and the possible instability for our ALOHA
model might be diminished, or perhaps eliminated, by reducing the
variability of packets. However, we show that even a slotted
(synchronized) ALOHA with packets of constant size can exhibit power
law delays when the number of active users is random. This spatial
correlation can have a significant impact on the performance of
ALOHA system when users are persistently present over a period of
time that is larger than the packet transmission time. A more
realistic framework to study this effect could assume that users
arrive and depart on a slower time scale.

 From the algorithmic perspective, we want to point
out that there are other possible ways to reduce the power law
delays in ALOHA, for example, adaptive ALOHA decreases by half the
retransmission rate after each collision, which might greatly reduce
the number of collisions at the expense of possibly low throughput.
 Hence, finding a right balance between the reduction of power law effects and
 a good throughput requires further investigation.

\section{Proofs of Lemmas~\ref{lemma:Nfm} and \ref{lemma:heavyAssmp}} \label{sc:ALOHAproof}

\begin{proof}[{\bf Proof of Lemma~\ref{lemma:Nfm}}] Our proof begins with
finding a subsequence $\mathcal {C}^{(s)}=\{C^{(s)}_{1},
C^{(s)}_{2}, \cdots\}$ from $\mathcal {C}=\{C_{h_m}, C_{h_m+1},
\cdots \}$; recall the definition of $h_m$ preceding the statement
of Lemma~\ref{lemma:Nfm}.  The procedure can be described
iteratively as follows: initially, set $C^{(s)}_{1}=C_{h_m}$, and
for $j\ge 1$, we denote by $C^{(s)}_{j+1}$ the smallest value in
$\mathcal {C}$ that is larger than $C^{(s)}_{j}+1$.  Based on this
subsequence, we define $Y_{j}, j\ge 1$ to be the number of
collisions and departures within each time interval
$\left[C^{(s)}_{j}, C^{(s)}_{j}+1\right]$; note that
$C^{(s)}_{j+1}>C^{(s)}_{j}+1$ by construction. Additionally, let $X$
be the index $j$ of the interval $\left[C^{(s)}_{j},
C^{(s)}_{j}+1\right]$ within which the system reaches the full state
for the first time after $C_{h_m}$, implying $\sum_{j=1}^{X-1}
Y_j\le N^f_m \le \sum_{j=1}^{X} Y_j$. We will prove that there
exists a probability $p>0$, such that for all $j\ge 1$,
$$\Pr\left[X > j\right] \leq (1- p)^j.$$

To this end, for each interval $\left[C^{(s)}_{j},
C^{(s)}_{j}+1\right], j\ge 1$,
 we construct a special event $\mathcal {E}_j$ such that on this event the system becomes full at a collision
 in  $\left[C^{(s)}_{j}, C^{(s)}_{j}+1\right]$, i.e., there exists $C_l \in \left[C^{(s)}_{j}, C^{(s)}_{j}+1\right]$
 such that $U(C_l+)=M$.   Our construction is described as follows.   We require that all the
backlogged users, including those already in the system immediately
after time $C^{(s)}_{j}$ and the new arrivals in $\left[C^{(s)}_{j},
C^{(s)}_{j}+1/2 \right]$ that collide with other users, make no
retransmissions during the entire interval $\left[C^{(s)}_{j},
C^{(s)}_{j}+1/2 \right]$, which occurs with a probability lower
bounded by $e^{-M\nu/2}$ since there are $M$ users in total and the
backoffs are independent and exponential (memoryless). Also, we
require that all empty users $(\le M)$ observed immediately after
time $C^{(s)}_{j}$ have new arrivals with sizes larger than one
$(L>1)$ within $\left[C^{(s)}_{j}, C^{(s)}_{j}+1/2 \right]$, which
happens with a probability lower bounded by $(1-e^{-\lambda/2})^M
\Pr[L>1]^M$. Now, if $M-U\left(C^{(s)}_{j}+\right)$ is even, our
construction implies that at time $C^{(s)}_{j}+1/2$ all the users
are backlogged, since two consecutive new arrivals after
$C^{(s)}_{j}$ collide with each other and after that they are not
allowed to retransmit before $C^{(s)}_{j}+1/2$, which implies that
there exists $C_{l} \in \left[C^{(s)}_{j}, C^{(s)}_{j}+1/2 \right]$
such that $U\left( C_{l}+\right) =M$.  On the other hand, if
$M-U\left(C^{(s)}_{j}+\right)$ is odd, at time $C_j^{(s)}+1/2$ there
is exactly one user transmitting and the remaining $M-1$ ones are
all backlogged. Now, we require that at least one backlogged user
retransmit during $\left[C^{(s)}_{j}+1/2, C^{(s)}_{j}+1 \right]$,
which occurs with probability $1-e^{-(M-1)\nu /2}$ due to the
memoryless property of the backoff distribution. Clearly, this
requirement ensures that the system is full at a collision time
within $\left[C^{(s)}_{j}+1/2,C^{(s)}_{j}+1\right]$.
 Thus, irrespective of whether $M-U\left(C^{(s)}_{j}+\right)$ is
even or odd, there exists $C_{l} \in \left[C^{(s)}_{j},
C^{(s)}_{j}+1 \right]$ such that $U\left( C_{l}+\right) =M$ on
$\mathcal {E}_j$.

Therefore,  we can uniformly lower bound the probability of
$\mathcal {E}_j$ conditional on $U\left(C^{(s)}_{j}+\right)$, or
equivalently on $M-U\left(C^{(s)}_{j}+\right) $, almost surely
(a.s.) as
\begin{align}\label{eq:expB10}
  \Pr &\left[ \mathcal {E}_j \;{\Big | }\; U\left(C^{(s)}_{j}+\right)
     \right] \nonumber\\
  &\;\; \geq e^{-M\mu/2}  (1-e^{-\lambda/2})^M \Pr[L>1]^M
       \left(1-e^{-(M-1)\nu /2}\right) \nonumber\\
        &\;\; \eqdef p>0.
\end{align}

Now, observe that $\mathcal {E}_j$ is determined by the value of
$U\left( C_j^{(s)}+\right)$ and the future new arrivals and backoff
times after time $C_j^{(s)}$. Furthermore, $\{X\geq j \}$ is
completely determined by the arrival and backoff processes before
time $C_j^{(s)}$.  Hence, due to the memoryless property of the
backoff and arrival times, the event $\{X\geq j\}$ and $\mathcal
{E}_j$ are conditioinally independent given $U\left(
C_j^{(s)}+\right)$.  Therefore,  by using (\ref{eq:expB10}), we
obtain, a.s.,
\begin{align*}
 \Pr &\left[ X \geq j, \mathcal {E}_j \;{\Big | }\; U\left(C^{(s)}_{j}+\right)
 \right] \nonumber\\
 & \quad \;\; =\Pr\left[X \geq j \;{\Big | }\;  U\left(C^{(s)}_{j}+\right)\right]
     \Pr \left[\mathcal {E}_j \;{\Big | }\; U\left(C^{(s)}_{j}+\right)
     \right]\nonumber\\
     &\quad \;\; \geq \Pr\left[X \geq j \;{\Big | }\;  U\left(C^{(s)}_{j}+\right)
     \right] p  ,
 \end{align*}
which implies
\begin{align}\label{eq:expB11}
 \Pr &\left[ X \geq j,  \mathcal {E}^{}_j
 \right]\geq \Pr\left[X \geq j \right]p.
 \end{align}
Thus, by noting that $\{X\geq j\}\cap \mathcal {E}^{}_j \subset
\{X=j\}$ and using (\ref{eq:expB11}), we obtain
 \begin{align*}
\Pr \left[ X \geq j \right] - \Pr \left[ X \geq j+1 \right]&= \Pr
\left[ X =j \right] \nonumber\\
 & \geq \Pr \left[ X \geq j, \mathcal {E}^{}_j \right] \nonumber\\
 &\geq  \Pr\left[X\geq j\right]    p,
 \end{align*}
which results in
\begin{equation*}
    \Pr\left[ X \geq j+1\right]
  \leq  \Pr\left[X\geq j \right] (1-p).
\end{equation*}
Iterating on $j$ in the preceding inequality yields
\begin{equation}\label{eq:expB3}
\Pr[X \geq j+1] \leq (1-p)^{j}.
\end{equation}

 Since the number of collisions and departures $Y_j$ within the interval $\left[C^{(s)}_{j}, C^{(s)}_{j}+1\right]$
 is bounded by the number of active users $U\left(C^{(s)}_{j}+\right)$ in the system immediately after
  time $C^{(s)}_{j}$ plus the total number of retransmissions and arrivals $Z_j$ within this interval,
 we obtain
\begin{equation}\label{eq:expB51}
   Y_j\leq U\left(C^{(s)}_{j}+\right) +Z_j \leq M + Z_j.
\end{equation}
Then, by noting that
  $Z_j$ is
stochastically smaller than a Poisson random variable with rate
$M\max(\nu, \lambda)$ and using (\ref{eq:expB3}), (\ref{eq:expB51}),
we obtain
\begin{align}
 \Pr[N^f_m>n]&\leq \Pr\left[  \sum_{j=1}^{X} Y_j >n \right]
 \nonumber\\
 & \leq \Pr[X>\sqrt{n}] + \sqrt{n} \Pr[Y_1>\sqrt{n}] \nonumber\\
 &=O\left(e^{-\eta \sqrt{n}} \right), \nonumber
\end{align}
for some $\eta>0$,  which finishes the proof.
\end{proof}

\begin{proof}[{\bf Proof of Lemma~\ref{lemma:heavyAssmp}}] Recall the
definition of $\{D_m\}_{m\ge 0}$ in Subsection~\ref{sec:description}
and denote the packet size of the new arrival at time $D_m$ by $L_m$
for all $m$.  First, we prove the case when $M=2$.   For
$m>w_1\eqdef \lceil 1/(\Pr[L>y]) \rceil$, we consider at time $D_m$
a set of departure points $\{D_{m-w_1}, D_{m-w_1+1}, \cdots, D_m\}$.
By the explanation before Lemma~\ref{lemma:heavyAssmp}, we know that
the system has $w_1+1$ number of arrivals in $\left[D_{m-w_1},
D_m\right]$. Define $\tau_{(2,1)}=\min \left\{j: L_j>y, j\geq m-w_1
\right\}$ and it is easy to see that there exists $y_0$ such that
for all $y>y_0$,
\begin{align}
\Pr[\tau_{(2,1)} < m] &= 1- \Pr[L\leq y]^{w_1} \nonumber\\
           &\geq 1-(1-\Pr[L>y])^{1/\Pr[L>y]} \nonumber\\
               &> 1-2e^{-1}, \nonumber
\end{align}
implying that the event $\mathcal {E}_{2}$,  a packet of size larger
than $y$ arriving to the system in $\left[D_{m-w_1},
D_{m-1}\right]$, has a positive probability.
  Now, since
$M=2$, denote  by $L^{(1)}_j$ the size of the packet that arrives
before $D_j$ but is still in the system observed immediately after
$D_j$. Then, define $\mathcal {D}_{2} \eqdef \{L^{(1)}_j>y,
\tau_{(2,1)} < j \leq m \}$, i.e., after packet $\tau_{(2,1)}$
arrives to the system in $\left[D_{m-w_1}, D_{m}\right]$, the size
of the remaining packet in the system observed immediately after
departure times is always larger than $y$. Now, we need to show that
$\Pr[\mathcal{D}_{2} | \mathcal{E}_{2}]$ is also positive. To this
end, we observe that at each point when a departure occurs the new
arrival to the system has a packet size equal in distribution to
$L$, and thus, $\Pr[\mathcal{D}_{2} | \mathcal{E}_{2}]$ is lower
bounded
\begin{align}
&\geq \expect\left[ \prod_{i=\tau_{(2,1)}+1}^{m}\left(\ind(L_i>y) +
\frac{e^{-\nu L_i }}{e^{-\nu L_i}+e^{-\nu y}} \ind(L_i\leq y)
\right)\right], \nonumber
\end{align}
where $\ind(L_i>y) + e^{-\nu L_i }/(e^{-\nu L_i}+e^{-\nu y})
\ind(L_i\leq y)$ gives the lower bound for the probability that
$\underline{L}(D_i)$ is larger than  $y$ at $D_i$. Since $\{L_i\}$
are i.i.d. random variables,  we obtain
\begin{align}\label{eq:steadylowRep}
\Pr[\mathcal{D}_{2} | \mathcal{E}_{2}] &\geq \left(1 -
\expect\left[\frac{e^{-\nu y }}{e^{-\nu L_i}+e^{-\nu y}}
\ind(L_i\leq y)\right]\right)^{w_1}.
\end{align}
Now,  it is straightforward to see that
\begin{align}\label{eq:steadylowI12}
\expect&\left[\frac{e^{-\nu y }}{e^{-\nu L_i}+e^{-\nu y}}
\ind(L_i\leq y)\right]= \int_{0}^{y}\frac{1}{1+e^{\nu
(y-x)}}d\Pr[L\leq x]\nonumber \\
&=-\frac{1}{1+e^{\nu(y-x)}}\Pr[L>x] {\Big |}_{0}^{y} +
\nu\int_{0}^{y}
\frac{\Pr[L>x]e^{\nu(y-x)}dx}{(1+e^{\nu(y-x)})^2}\nonumber \\
&= \frac{1}{1+e^{\nu y}} -\frac{1}{2}\Pr[L>y] +
\nu\int_{0}^{\epsilon y}
\frac{\Pr[L>x]e^{\nu(y-x)}dx}{(1+e^{\nu(y-x)})^2}
\nonumber\\
&\quad +\nu\int_{\epsilon y}^{y}
\frac{\Pr[L>x]e^{\nu(y-x)}dx}{(1+e^{\nu(y-x)})^2}.
\end{align}
Since $\nu>\mu$, in the preceding  equality, by choosing
$0<\epsilon<1-\mu/\nu$, we obtain
\begin{align}\label{eq:steadylowI1}
\nu\int_{0}^{\epsilon y}
\frac{\Pr[L>x]e^{\nu(y-x)}dx}{(1+e^{\nu(y-x)})^2}&\leq \nu e^{-\nu
(1-\epsilon) y }=o(\Pr[L>y]).
\end{align}
Next, observe
\begin{align}
&\nu\int_{\epsilon y}^{y}
\frac{\Pr[L>x]e^{\nu(y-x)}dx}{(1+e^{\nu(y-x)})^2} \nonumber\\
&\quad \quad \leq
\nu\int_{\epsilon y}^{y}\Pr[L>x] e^{-\nu (y-x) }dx\nonumber\\
&\quad \quad \leq \nu \Pr[L> y]\int_{\epsilon
y}^{y}\frac{\Pr[L>x]}{\Pr[L>y]} e^{-\nu (y-x) }dx, \nonumber
\end{align}
which, by recalling condition (\ref{eq:steadyCondition}), implies
that there exist $0<\delta<\nu-\mu$ and $y_{\delta}>0$ such that for
all $y>y_{\delta}$,
\begin{align}\label{eq:steadylowI2}
&\nu\int_{\epsilon y}^{y}
\frac{\Pr[L>x]e^{\nu(y-x)}dx}{(1+e^{\nu(y-x)})^2} \nonumber\\
&\quad \quad\leq \nu \Pr[L> y]\int_{\epsilon
y}^{y}e^{(\mu+\delta)(y-x)} e^{-\nu (y-x)}dx
\nonumber\\
&\quad \quad=O(\Pr[L>y]).
\end{align}
Substituting (\ref{eq:steadylowI12}), (\ref{eq:steadylowI1}) and
(\ref{eq:steadylowI2}) into (\ref{eq:steadylowRep}), we obtain, for
$p_2>0$ and $y$ big enough,
\begin{equation}
\Pr[\mathcal{D}_{2} | \mathcal{E}_{2}]>p_2,
\end{equation}
which finishes the proof of the lemma for $M=2$.

Now, we prove the case when $M=3$.  For $m>w_2=\lceil 2/(\Pr[L>y])
\rceil$, consider a set of time points $\mathcal{W}=\{D_{m-w_2},
D_{m-w_2+1}, \cdots, D_m\}$.  Define $\tau_{(3,1)}=\min \left\{j:
L_j>y, j\geq m-w_2 \right\}$ and $\tau_{(3,2)}=\min \left\{j: L_j>y,
j> \tau_{(3,1)}\right\}$. It is easy to see that there exists $y_0$
such that for all $y>y_0$,
\begin{align}
\Pr[\tau_{(3,2)}< m] &\geq \Pr\left[\tau_{(3,1)} \leq \frac{w}{2},
\tau_{(3,2)} -
\tau_{(3,1)}\leq \frac{w}{2}\right] \nonumber\\
               &\geq \left(1- \Pr[L\leq y]^{\frac{w}{2}}\right) \left(1- \Pr[L\leq y]^{\frac{w}{2}}\right) \nonumber\\
               &\geq \left(1-2e^{-1} \right)^2, \nonumber
\end{align}
implying that the event $\mathcal {E}_{3}$,  two packets of size
larger than $y$ arriving to the system in $\left[D_{m-w_2},
D_{m-1}\right]$, has a positive probability.

 Since
$M=3$, we can denote by $L^{(1)}_j \geq L^{(2)}_j$ the order
statistics of the sizes of the packets excluding the one just
arriving to the system at time $D_j$. Then, we define an event
$\mathcal {D}_{3}$ by $\mathcal {D}_{3} \eqdef \left\{ L^{(1)}_j>y,
\tau_{(3,1)} < j \leq \tau_{(3,2)} \right\}$ $\bigcup \left\{
L^{(2)}_j >y, \tau_{(3,2)} < j \leq m \right\}$, i.e., after packet
$\tau_{(3,1)}$ arrives and before packet $\tau_{(3,2)}$ comes to the
system in $\left[D_{m-w_2}, D_{m-2}\right]$, one of the remaining
packet sizes in the system observed immediately after each departure
time is always larger than $y$; after packet $\tau_{(3,2)}$ arrives
to the system in $\left[D_{m-w_2}, D_{m-1}\right]$, all the
remaining packet sizes in the system observed immediately after
departure times are always larger than $y$. Now, we need to show
that $\Pr[\mathcal{D}_{3} | \mathcal{E}_{3}]$ is positive. To this
end, we observe that $\Pr[\mathcal{D}_{3} | \mathcal{E}_{3}]$ is
lower bounded by
\begin{align}
 &\expect {\Bigg [}
\prod_{i=\tau_{(3,1)}+1}^{\tau_{(3,2)}}{\Bigg (} \ind {\Bigg(}\{L_i>
y\} \cup \{L^{(2)}_i> y \} {\Bigg)} \nonumber\\
&\quad \quad  + \frac{e^{-2\nu L_i }+e^{-2\nu L^{(2)}_i}}{e^{-2\nu
L_i}+e^{-2\nu y} +e^{-2\nu L^{(2)}_j}} \ind\left(L_i\leq y,
L^{(2)}_i\leq y\right)
  {\Bigg )}
\nonumber\\
 & \;\;\;\;\,
 \prod_{i=\tau_{(3,2)+1}}^{w_2} \left(\ind(L_i>
y) + \frac{e^{-2\nu L_i }}{e^{-2\nu L_i}+2e^{-2\nu y}} \ind(L_i\leq
y) \right) {\Bigg ]}, \nonumber
\end{align}
which, by recalling that $\{L_i\}$ are i.i.d. and noting that
$\tau_{(3,2)}-\tau_{(3,1)}\leq w_2$, $w_2-\tau_{(3,2)}\leq w_2$,
implies that $\Pr[\mathcal{D}_{3} | \mathcal{E}_{3}]$ is lower
bounded by
\begin{align}
 & \left( 1-\expect \left[\frac{e^{-2\nu y }}{e^{-2\nu
L_i}+e^{-2\nu y}} \ind(L_i\leq y, L^{(2)}_i \leq y) \right]
\right)^{w_2}  \nonumber \\
&\quad \quad  \times \left(1- \expect\left[\frac{2 e^{-2\nu y
}}{e^{-2\nu L_i}+2e^{-2\nu
y}} \ind(L_i\leq y) \right] \right)^{w_2} \nonumber\\
&\quad \geq \left( 1- \expect \left[\frac{2e^{-2\nu y }}{e^{-2\nu
L_i}+e^{-2\nu y}} \ind(L_i\leq y) \right] \right)^{2w_2}.
\end{align}
Then, by using the same approach as in evaluating
(\ref{eq:steadylowRep}), we can easily obtain, for $p_3>0$,
\begin{equation}
\Pr[\mathcal{D}_{3} | \mathcal{E}_{3}]>p_3,
\end{equation}
which finishes the proof of the case $M=3$.

The situation $M>3$, although notationally complicated,  follows
easily by induction using the same arguments as in proving $M=2, 3$.
For these reasons we omit the details.   \end{proof}

 \small
\bibliographystyle{abbrv}
\bibliography{arXivALOHA}

%
%


\end{document}